\documentclass[10pt,a4paper]{article}

\usepackage[utf8]{inputenc}
\usepackage[T1]{fontenc}
\usepackage[table]{xcolor}
\definecolor{primary}{HTML}{006BA2}
\definecolor{accent}{HTML}{E3120B}
\definecolor{neutral}{HTML}{4A4A4A}
\definecolor{boxfill}{HTML}{E8F4FD}
\definecolor{boxborder}{HTML}{006BA2}
\definecolor{defcolor}{HTML}{2E7D32}
\definecolor{todobg}{HTML}{FFF4E5}
\definecolor{todoframe}{HTML}{B45309}

\usepackage{style/arxiv-paper}
\usepackage{hyperref}
\usepackage{cleveref}
\usepackage{thmtools}
\usepackage{authblk}
\usepackage{orcidlink}

\hypersetup{
  colorlinks=true,
  linkcolor=primary,
  citecolor=primary,
  urlcolor=accent,
  bookmarksnumbered=true,
  bookmarksopen=true,
  pdflang={en-US}
}

\declaretheoremstyle[
  headfont=\bfseries\color{primary},
  notefont=\normalfont,
  bodyfont=\itshape,
  headpunct={.},
  spaceabove=0pt,
  spacebelow=0pt
]{thmplain}
\declaretheoremstyle[
  headfont=\bfseries\color{defcolor},
  notefont=\normalfont,
  bodyfont=\normalfont,
  headpunct={.},
  spaceabove=0pt,
  spacebelow=0pt
]{thmdef}
\declaretheoremstyle[
  headfont=\itshape\bfseries\color{neutral},
  notefont=\normalfont,
  bodyfont=\normalfont,
  headpunct={.},
  spaceabove=0pt,
  spacebelow=0pt
]{thmrem}

\declaretheorem[style=thmplain,name=Theorem,numberwithin=section]{theorem}
\declaretheorem[style=thmplain,name=Lemma,sibling=theorem]{lemma}

\declaretheorem[style=thmplain,name=Corollary,sibling=theorem]{corollary}

\declaretheorem[style=thmdef,name=Assumption,sibling=theorem]{assumption}
\declaretheorem[style=thmdef,name=Open Problem]{openproblem}
\declaretheorem[style=thmrem,name=Remark,sibling=theorem]{remark}

\tcbset{
  thmbox/.style={
    enhanced,sharp corners,boxrule=0pt,frame hidden,
    colback=primary!6,borderline west={2.5pt}{0pt}{primary},
    left=10pt,right=8pt,top=6pt,bottom=6pt,
    before skip=8pt,after skip=8pt
  },
  defbox/.style={
    enhanced,sharp corners,boxrule=0pt,frame hidden,
    colback=defcolor!7,borderline west={2.5pt}{0pt}{defcolor},
    left=10pt,right=8pt,top=6pt,bottom=6pt,
    before skip=8pt,after skip=8pt
  },
  rembox/.style={
    enhanced,sharp corners,boxrule=0pt,frame hidden,
    colback=neutral!5,borderline west={2.5pt}{0pt}{neutral},
    left=10pt,right=8pt,top=6pt,bottom=6pt,
    before skip=8pt,after skip=8pt
  },
  openbox/.style={
    enhanced,sharp corners,boxrule=0pt,frame hidden,
    colback=defcolor!7,borderline west={2.5pt}{0pt}{defcolor},
    left=8pt,right=7pt,top=3pt,bottom=3pt,
    before skip=4pt,after skip=4pt
  }
}
\tcolorboxenvironment{theorem}{thmbox}
\tcolorboxenvironment{lemma}{thmbox}
\tcolorboxenvironment{proposition}{thmbox}
\tcolorboxenvironment{corollary}{thmbox}
\tcolorboxenvironment{conjecture}{thmbox}
\tcolorboxenvironment{definition}{defbox}
\tcolorboxenvironment{assumption}{defbox}
\tcolorboxenvironment{openproblem}{openbox}
\tcolorboxenvironment{remark}{rembox}

\newcommand{\doenotice}{%
This manuscript has been authored by UT-Battelle, LLC under Contract
No.\ DE-AC05-00OR22725 with the U.S. Department of Energy. The publisher,
by accepting the article for publication, acknowledges that the United
States Government retains a non-exclusive, paid-up, irrevocable, world-wide
license to publish or reproduce the published form of this manuscript, or
allow others to do so, for United States Government purposes. The Department
of Energy will provide public access to these results of federally sponsored
research in accordance with the DOE Public Access Plan
(\url{https://www.energy.gov/doe-public-access-plan}).%
}

\author[1]{Piyush Sao\,\orcidlink{0000-0002-9432-5855}%
  \thanks{Corresponding author: \href{mailto:saopk@ornl.gov}{saopk@ornl.gov}}}
\author[1]{Narasinga Miniskar\,\orcidlink{0000-0001-8259-8891}}
\author[1]{Pedro Valero-Lara\,\orcidlink{0000-0002-1479-4310}}
\author[1]{Keita Teranishi\,\orcidlink{0000-0001-6647-2690}}
\author[1]{Sudip~Seal\,\orcidlink{0000-0003-3233-0656}}
\affil[1]{Oak Ridge National Laboratory, Oak Ridge, Tennessee 37831, USA\\
  \texttt{saopk@ornl.gov, miniskarnr@ornl.gov, valerolarap@ornl.gov}\\
  \texttt{teranishik@ornl.gov, sealsk@ornl.gov}}
\date{}

\title{\bfseries A Nuclear-Norm Lower Bound for\\
Dithered Scalar Quantization of Matrix Products\thanks{\doenotice}}
\hypersetup{
  pdftitle={A Nuclear-Norm Lower Bound for Dithered Scalar Quantization of Matrix Products},
  pdfauthor={Piyush Sao, Narasinga Miniskar, Pedro Valero-Lara, Keita Teranishi, Sudip Seal},
  pdfsubject={Lower bounds and gauge transformations for scalar-quantized matrix products},
  pdfkeywords={matrix quantization, nuclear norm, dither, matrix products, gauge transformations}
}

\begin{document}
\maketitle

\begin{abstract}
When computing a low-precision matrix product \(C=AB\), scalar quantization
rounds the entries of the two input factors.  The resulting error is sensitive
to how \(C\) is factorized: a product-preserving transformation can change the
largest absolute entries---the \emph{ranges}---of the factor rows and columns.
These ranges set the quantization grid steps, so the transformation can change
the error without changing \(C\).  We ask for the smallest leading
contribution to the expected squared error over all invertible inner changes
of basis and orthogonal outer rotations.  Under
independent, zero-mean subtractive-dither noise on an unbounded lattice, we
derive an output-only lower bound for the leading expected squared error:
\[
  E_{\mathrm{lead}}\ge
  \frac{c_A+c_B}{K}\,\lVert AB\rVert_*^2 .
\]
Here \(K\) is the shared inner dimension, \(c_A,c_B\) are normalized noise
variances set by the two factor grids, and the nuclear norm
\(\lVert AB\rVert_*\) is the sum of the output singular values.  Although the
bound composes two standard inequalities, their composition is tight.  At any
dimension admitting a Hadamard matrix---including every power of two---an
SVD-aligned construction makes the infimum equal to the bound; for every
\(K\), an SVD-aligned DCT construction comes within a factor of two.  If outer rotations are
unavailable, balancing the Gram matrices \(A^\top A\) and \(BB^\top\)
determines the optimal factorization energy, and finite-set flattening reaches
the lower bound within a factor
\(C\log(K(m+n))\).  For power-of-two \(K\), the conditional-expectation method
implements the Hadamard sign-selection stage deterministically in
\(O((m+n)K^2)\) exact-real arithmetic operations.  Synthetic experiments check the Hadamard
full-gauge and inner-only constructions and illustrate the instance-specific tradeoff
between regularization and transform conditioning.  Together, these results
characterize the full-gauge optimum and quantify the inner-only cost of keeping
row and column indices fixed.

\end{abstract}

\section{Introduction}
\label{sec:nuc-intro}

Low-precision matrix multiplication is a central kernel in machine-learning
inference and, increasingly, in scientific computing.  For
\(A\in\R^{m\times K}\) and \(B\in\R^{K\times n}\), a scalar scheme rounds the
factors before accumulating their \(K\)-term inner products.  Here \(K\) is the
shared inner dimension of \(C=AB\).  The quantization error depends
on how \(C\) is factorized, though \(C\) itself does not.  We ask: what is the
smallest leading error achievable over all product-preserving
transformations?

An error in \(A_{ik}\) is multiplied by the row \(B_{k,:}\), and an error in
\(B_{kj}\) is multiplied by the column \(A_{:,k}\).  An entrywise error metric
therefore fails to measure what matters.  Meanwhile, a scalar grid sets its
step from a row or column \emph{range}: the vector's largest absolute entry.
The companion transform study \cite{qmm} develops this
product-weighted error analysis and compares practical transform families.  We
instead seek a lower bound over a broad product-preserving class.

For \(U_L\in\Oo(m)\), \(T\in\GL(K)\), and \(U_R\in\Oo(n)\), consider
\[
 (A,B)\longmapsto (U_LAT,T^{-1}BU_R).
\]
We call the triple \((U_L,T,U_R)\) a \emph{gauge}; a quantity is
\emph{gauge-invariant} if it depends only on \(AB\).  Operationally, the
factors are transformed, \(\ket A=U_LAT\) is quantized row-wise and
\(\ket B=T^{-1}BU_R\) column-wise, the quantized factors are multiplied, and
the outer rotations are undone:
\[
  \that C=U_L^\top(\ket A+E_A)(\ket B+E_B)U_R^\top .
\]

The singular values of the output govern the answer.  Under the model
of \Cref{ass:nuc-model}, no gauge can push the leading expected squared error
below
\begin{equation}
  \Elead\ge \frac{c_A+c_B}{K}\,\nuc{AB}^{\,2}.
  \label{eq:nuc-intro-floor}
\end{equation}
The constants \(c_A,c_B\) are normalized entrywise variances from the two
grid steps; for a signed \(b\)-bit nominal grid,
\(c=1/[12(2^{b-1}-1)^2]\).  The nuclear norm \(\nuc{AB}\) is the sum of the
output singular values.  We call \eqref{eq:nuc-intro-floor} the
\emph{independent-noise floor}: a lower bound invariant under refactorization.

The lower bound follows by composing two standard inequalities: a squared grid
range must cover the average squared energy of the vector it quantizes, and a
Frobenius product must dominate the nuclear norm of its product.  Its primary
value is not the novelty of these inequalities but the tightness of their
composition.  At a \emph{Hadamard dimension}---an order \(K\) admitting a
\(K\times K\) Hadamard matrix, including every power of two---an SVD-aligned
construction makes the infimum equal to \eqref{eq:nuc-intro-floor}.  For every
other dimension, a normalized DCT-II construction remains within a factor of
two.  Thus the floor is both cheap to compute---it reads only the output
singular values---and impossible to undercut.

A two-by-two instance traces the whole argument.  Take
\(A=\operatorname{diag}(4,1)\) and \(B=\operatorname{diag}(1,4)\), so
\(C=\operatorname{diag}(4,4)\) and \(\nuc C=8\).  Quantizing the factors as
given, the leading objective is \(289(c_A+c_B)\), roughly nine times the
floor \(32(c_A+c_B)\).  The diagonal balance \(T=\operatorname{diag}(1/2,2)\)
equalizes both factors to \(\operatorname{diag}(2,2)\) and reduces the leading
objective to \(64(c_A+c_B)\): the factorization energy now matches the nuclear
value, but each row concentrates its energy in one coordinate, so the grid
ranges still exceed the average energy.  Composing with the normalized
\(2\times2\) Hadamard rotation spreads every row and column evenly, and the
objective reaches \(32(c_A+c_B)\): the floor, exactly, with no outer
rotation.  The three stages---floor, balancing, flattening---are the three
sections that follow, and the two loss mechanisms that the example separates
are exactly the two places where the proof of
\eqref{eq:nuc-intro-floor} could be loose.

Over the full gauge, the infimum equals the lower bound at Hadamard dimensions
and is within a universal factor of two for every \(K\).  Outer rotations align
the free indices with the output singular vectors, while an inner orthogonal
matrix spreads each retained singular direction.  A normalized type-II
discrete cosine transform (DCT-II) gives the factor-two bound.  When free
indices carry meaning---tokens, batch items, or output channels---outer
rotations may be unavailable.  In that inner-only setting, Gram-matrix
balancing closes the factorization-energy loss.  We then choose one rotation
after seeing the \(m+n\) rows and columns to quantize, which closes the remaining
peak-to-average gap within a logarithmic factor.  This guarantee applies to the
observed finite set; uniform control over every input is a separate
requirement, and the gap between the two is where the Kashin theory of
redundant representations enters (\Cref{sec:nuc-uncertainty}).

We make five contributions.
\begin{enumerate}
  \item We prove the gauge-invariant lower bound
  \eqref{eq:nuc-intro-floor} under independent, zero-mean per-vector noise,
  and show that its two constituent inequalities are individually sharp.
  \item We prove that the factorization-energy infimum equals
  \(\nuc{AB}^{2}\) for arbitrary factors and give an explicit balancing
  transform for positive-definite Gram matrices.
  \item We show that, over the full gauge, an SVD-aligned construction makes the
  infimum equal to the lower bound at Hadamard dimensions and comes within a
  factor of two for every \(K\); the floor is therefore a characterization,
  not merely a bound.
  \item We show that, for inner-only transforms, finite-set flattening gives an
  \(O(\log(K(m+n)))\) upper factor; for power-of-two \(K\), conditional
  expectations give a deterministic \(O((m+n)K^2)\) exact-real sign-selection
  procedure for the Hadamard flattening stage.
  \item We express the inner-only loss as a peak-to-average comparison and
  connect its uniform, all-input form to Kashin representations.
\end{enumerate}

Recent product-aware methods address complementary quantization settings.
Ordentlich and Polyanskiy analyze direct coding limits for matrix products
\cite{ordentlich2024optimal,ordentlich2026highrate}, while WUSH derives
data-dependent transforms for RTN AbsMax block quantizers from a first-order
stochastic surrogate \cite{wush}.
Practical activation and weight transformations include SmoothQuant, QuaRot,
SpinQuant, and AffineQuant
\cite{xiao2023smoothquant,ashkboos2024quarot,liu2025spinquant,ma2024affinequant}.
These works study coding limits or practical transforms under different
finite-rate constraints; here we identify and characterize a gauge-invariant
floor for product-weighted quantization error.

\Cref{sec:nuc-background} develops the error analysis and notation.
\Cref{sec:nuc-model} states the stochastic model and second-moment identity.
\Cref{sec:nuc-floor,sec:nuc-balancing} prove the lower bound and close its
factorization-energy component.  \Cref{sec:nuc-achievability} gives full-gauge
constant achievability and the inner-only logarithmic construction;
\Cref{sec:nuc-derandomization} derandomizes the latter.
\Cref{sec:nuc-uncertainty} gives the peak-to-average interpretation, and
\Cref{sec:nuc-experiments} evaluates the proved quantities.

\paragraph{Relation to companion work.}
The three manuscripts share the matrix-product setting but answer different
questions; neither companion is a logical premise for the results below.
\Cref{tab:nuc-companion-scope} states the separation.
\begin{table}[H]
\centering
\small
\caption{Separation from companion work.}
\label{tab:nuc-companion-scope}
\begin{tabular}{@{}p{0.22\linewidth}p{0.69\linewidth}@{}}
\toprule
Manuscript & Primary scope\\
\midrule
This paper
& Independent-noise lower bounds over the full gauge, exact or constant
full-gauge achievability, and inner-only logarithmic achievability.\\
Metric companion \cite{metriccompanion}
& Coordinated deterministic choices on fixed grids, formulated as dynamic and
static metric-discrepancy problems with complexity results.\\
Transform companion \cite{qmm}
& Product-weighted error analysis, practical transform families, and empirical
preconditioning comparisons for quantized matrix multiplication.\\
\bottomrule
\end{tabular}
\end{table}

Together, the lower bound and the constructions give a usable certificate for
transform design: the nuclear floor measures unavoidable leading error, while
finite-set flatness measures the additional cost of preserving
application-visible indices.

\section{Background and related work}
\label{sec:nuc-background}

The proof combines four standard ingredients: product-weighted error,
product-preserving transformations, the nuclear-norm factorization identity,
and peak-to-average flattening.  The result---an exact gauge-invariant
characterization at Hadamard dimensions, a constant-factor construction for
every \(K\), and an inner-only logarithmic bound---is new.

\subsection{Product-weighted error}

Write independently perturbed factors as
\(\that A=A+E_A\) and \(\that B=B+E_B\).  Expanding the product gives
\[
  \that A\that B-AB=E_AB+AE_B+E_AE_B.
\]
If the entries of \(E_A\) and \(E_B\) are mutually independent and centered,
then
\begin{equation}
\begin{split}
  \E\,\fro{\that A\that B-AB}^{2}
  ={}&\sum_{i,k}v^A_{ik}\,\tw{B_{k,:}}^{2}
     +\sum_{k,j}v^B_{kj}\,\tw{A_{:,k}}^{2}\\
   &+\sum_k\left(\sum_i v^A_{ik}\right)
            \left(\sum_j v^B_{kj}\right),
\end{split}
\label{eq:nuc-background-master}
\end{equation}
where \(v^A_{ik}\) and \(v^B_{kj}\) are entrywise variances.  The first two
terms show why an entrywise metric is insufficient: the other factor weights
each perturbation.  The final term is the exact bilinear contribution from
perturbing both factors.  The transform companion develops
\eqref{eq:nuc-background-master} for practical transform classes \cite{qmm};
we prove the specialized form in \Cref{sec:nuc-model}.

Subtractive dither adds an independent uniform offset, rounds to the shifted
lattice, and removes the offset.  Without \emph{overload}---leaving the
representable interval---the error is uniform over one grid cell and independent
of the input \cite{graystockham}.  Treating deterministic round-to-nearest error
as locally uniform is instead a high-resolution approximation that requires
negligible clipping \cite{grayneuhoff}.  The variances in
\eqref{eq:nuc-background-master} depend on how the factors are represented; the
next subsection specifies the allowed transformations.

\subsection{Gauge invariance}

For every \(T\in\GL(K)\), the inner change of basis
\[
  AB=(AT)(T^{-1}B)
\]
preserves the product.  Orthogonal transformations on the free indices extend
this freedom: for \(U_L\in\Oo(m)\) and \(U_R\in\Oo(n)\),
\begin{equation}
 U_L^\top\big[(U_LAT)(T^{-1}BU_R)\big]U_R^\top=AB.
 \label{eq:nuc-gauge-invariance}
\end{equation}
The full gauge is \(\Oo(m)\times\GL(K)\times\Oo(n)\).  Inner transforms reshape
both factors; each outer transform changes the ranges of its own factor.  Outer
rotations matter: \Cref{thm:nuc-full-gauge} shows that they close the logarithmic
gap left by inner transforms alone.

The infimum of \(\fro{AT}\fro{T^{-1}B}\) over the inner orbit is governed by one
output invariant.

\subsection{Nuclear-norm factorization}

The nuclear norm \(\nuc C=\sum_\ell\sigma_\ell(C)\) is orthogonally invariant
and satisfies
\begin{equation}
  \nuc C=\min_{XY=C}\fro X\,\fro Y
  \label{eq:nuc-factorization}
\end{equation}
when the inner dimension is at least \(\rank(C)\)
\cite{srebro2005rank}.  Equivalently,
\(\nuc{XY}\le \fro X\,\fro Y\).  If \(\rank(C)=K\), any two
inner-dimension-\(K\) factorizations of \(C\) differ by an invertible change of
basis.  This single-orbit statement fails when \(\rank(C)<K\), but regularized
gauges can squeeze directions that do not contribute to the product.
\Cref{lem:nuc-balance} proves that the orbit infimum remains \(\nuc{AB}\).

The Gram-matrix equalization used later parallels classical balanced state-space
realizations \cite{moore1981balanced}.  Here the matrices arise from a
factorized product, and we optimize a multiplicative trace objective.

\subsection{Range versus energy}

A scalar grid sets its step from the largest absolute coordinate of the vector
it quantizes, namely \(\ii z\).  A single large coordinate therefore stretches
the grid and coarsens the resolution available to every other entry.  The
lower bound, by contrast, depends on the vector's average energy, namely
\(\tw z^2/K\).  For \(z\in\R^K\),
\begin{equation}
  \frac{\tw z^2}{K}\le \ii z^2\le \tw z^2.
  \label{eq:nuc-range-energy}
\end{equation}
The ratio
\(K\ii z^2/\tw z^2\), which we call the \emph{peak-to-average ratio}, measures
this gap.  It equals one when the coordinates have equal magnitude and can
reach \(K\) for a coordinate spike.  Minimizing this ratio---flattening the
vector so that no coordinate dominates---is therefore the key to reaching the
lower bound.  A shared rotation seeks to flatten all rows of one factor and all
columns of the other.  Over this finite set, random rotations lose only a
logarithmic factor; uniform control of every input instead requires the
redundant representations discussed in \Cref{sec:nuc-uncertainty}.

\subsection{Relation to quantization methods}

Ordentlich and Polyanskiy study direct coding bounds for matrix products
\cite{ordentlich2024optimal} and high-rate rate--distortion comparisons for
scalar INT, floating-point, and lattice schemes
\cite{ordentlich2026highrate}.  Their arbitrary-rate coding benchmark and
analysis of practical AbsMax INT, floating-point, and lattice schemes differ
from our fixed-pair scalar row/column-grid lower bound.  Their Part II
treats the distinct covariance-informed, weight-only setting
\cite{ordentlich2026highrate2}.

WUSH targets RTN
AbsMax-scaled block quantizers and optimizes a distributional, first-order
stochastic surrogate built from blockwise second moments \cite{wush}; our
finite-set construction instead sees one fixed collection of rows and columns.
The nuclear norm adds a factorization-independent lower bound rather than
another transform recipe.

Practical methods usually omit outer rotations because free indices can encode
tokens, batch elements, or output channels.  SmoothQuant, QuaRot, SpinQuant,
and AffineQuant
\cite{xiao2023smoothquant,ashkboos2024quarot,liu2025spinquant,ma2024affinequant}
therefore operate under different structural and finite-rate constraints.  The
full-gauge theorem characterizes the mathematical transformation class, while
the inner-only theorem is the closer comparison when free indices must remain
fixed.

\section{Noise model and second-moment identity}
\label{sec:nuc-model}

The nuclear lower bound requires fixing the stochastic assumptions and the
\(m\) row ranges and \(n\) column ranges.  We define the transformed factors,
derive their normalized noise constants, and separate the leading objective
from the bilinear correction.

\subsection{Transformed factors and scalar grids}

For a gauge
\((U_L,T,U_R)\in\Oo(m)\times\GL(K)\times\Oo(n)\), define
\begin{equation}
  \ket A=U_LAT,\qquad \ket B=T^{-1}BU_R.
  \label{eq:nuc-transformed-factors}
\end{equation}
A nominal signed \(b\)-bit symmetric grid with range \(r\) has
\(q=2^{b-1}-1\) positive steps, levels
\(\{jr/q:j=-q,\ldots,q\}\), and cell width \(\Delta=r/q\).
Each row of \(\ket A\) and each column of \(\ket B\) selects its own range:
\[
  r_i=\ii{\ket A_{i,:}},\qquad
  \gamma_j=\ii{\ket B_{:,j}}.
\]
For bit widths \(b_A,b_B\ge2\), set
\begin{equation}
  c_A=\frac{1}{12(2^{b_A-1}-1)^2},\qquad
  c_B=\frac{1}{12(2^{b_B-1}-1)^2}.
  \label{eq:nuc-quantization-constants}
\end{equation}
The factor \(1/12\) is the variance of a uniform variable on a unit-width
interval: subtractive dither on a cell of width \(\Delta=r/q\) has variance
\(\Delta^2/12\).  The nominal bit width is one way to set the step; the exact
model below depends only on the resulting step.

\begin{assumption}[Dithered per-vector model]
\label{ass:nuc-model}
All entries of \(E_A\) and \(E_B\) are mutually independent and have zero mean,
with
\begin{equation}
  \Var(E^A_{ik})=c_A r_i^2,\qquad
  \Var(E^B_{kj})=c_B \gamma_j^2.
  \label{eq:nuc-variance-fields}
\end{equation}
Entrywise independence within each row and column is part of the assumption:
\emph{per-vector} refers to the shared range, not to shared vector-level dither.
\end{assumption}

\begin{center}
\small
\begin{tabular}{@{}p{0.18\linewidth}p{0.74\linewidth}@{}}
\textbf{Exact theorem:} & independent subtractive dither on an unbounded
lattice, hence no overload;\\
\textbf{Approximation:} & the high-resolution granular model for a finite grid
with a guard band, an interior margin that makes clipping negligible;\\
\textbf{Excluded:} & deterministic round-to-nearest with clipping or
correlated errors.
\end{tabular}
\end{center}

Define
\[
 \that C
 =U_L^\top(\ket A+E_A)(\ket B+E_B)U_R^\top .
\]
Here, \(\that C\) is the product of the quantized transformed factors after
undoing the outer rotations.  Orthogonal invariance of the Frobenius norm lets
us analyze the transformed product directly.

\subsection{Per-vector second-moment identity}

To understand how a gauge affects the final product error, we first separate
the exact expectation into its constituent parts.  The identity below is the
computational core of the paper: it splits the error into two leading
contributions driven by the grid ranges and a higher-order bilinear correction.
Each leading contribution multiplies the total squared range on one side by
the full energy of the opposite factor.  This multiplicative structure
explains why the lower bound in \Cref{sec:nuc-floor} naturally contains a
product of factor energies, which is then bounded by an output-only quantity.

\begin{lemma}[Second-moment identity; per-vector form]
\label{lem:nuc-master}
Under \Cref{ass:nuc-model},
\begin{equation}
\begin{split}
 \E\,\fro{\that C-C}^{2}
 ={}&
 \underbrace{c_A\fro{\ket B}^{2}\sum_i r_i^2
 +c_B\fro{\ket A}^{2}\sum_j\gamma_j^2}_{\Elead}\\
 &+
 \underbrace{Kc_Ac_B
 \left(\sum_i r_i^2\right)\left(\sum_j\gamma_j^2\right)}
 _{E_{\mathrm{bil}}}.
\end{split}
\label{eq:nuc-leading-identity}
\end{equation}
\end{lemma}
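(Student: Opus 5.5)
The plan is to reduce to the transformed product, expand it, and take expectations term by term, using independence and zero mean to kill every cross term. Orthogonal invariance of the Frobenius norm gives
\[
\fro{\that C-C}=\fro{(\ket A+E_A)(\ket B+E_B)-\ket A\ket B},
\]
because \(U_L^\top\ket A\ket B U_R^\top=AB=C\) by \eqref{eq:nuc-gauge-invariance}. It therefore suffices to analyze
\[
D=E_A\ket B+\ket A E_B+E_AE_B.
\]

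First I expand \(\E\fro D^2\) into three squared terms and three cross terms.

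\begin{itemize}
\item The cross term \(\E\langle E_A\ket B,\ket AE_B\rangle\) is bilinear in \(E_A\) and \(E_B\). These are independent and zero mean, so it factors into a product of means and vanishes.
\item The term \(\E\langle E_A\ket B,E_AE_B\rangle\) is linear in \(E_B\) given \(E_A\), so it vanishes after conditioning on \(E_A\).
\item Symmetrically, \(\E\langle \ket AE_B,E_AE_B\rangle\) vanishes after conditioning on \(E_B\).
\end{itemize}

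One subtlety needs a remark. The variances \(c_Ar_i^2\) and \(c_B\gamma_j^2\) depend on the ranges of the deterministic transformed factors, not on the noise. So conditioning is harmless, and the noise entries can be treated as independent centered variables with fixed variances.

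Next I compute the three squared terms entrywise. Write \((E_A\ket B)_{ij}=\sum_k E^A_{ik}\ket B_{kj}\). Independence within row \(i\) gives
\[
\E(E_A\ket B)_{ij}^2=\sum_k c_Ar_i^2\ket B_{kj}^2 .
\]
Summing over \(i,j\) yields \(c_A\fro{\ket B}^2\sum_i r_i^2\). In the same way, \(\E\fro{\ket AE_B}^2=c_B\fro{\ket A}^2\sum_j\gamma_j^2\).

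For the bilinear term, \((E_AE_B)_{ij}=\sum_k E^A_{ik}E^B_{kj}\). The expectation of the product of the \(k\) and \(k'\) summands is
\[
\E[E^A_{ik}E^A_{ik'}]\,\E[E^B_{kj}E^B_{k'j}],
\]
which is nonzero only when \(k=k'\). Hence
\[
\E(E_AE_B)_{ij}^2=\sum_k c_Ar_i^2\,c_B\gamma_j^2=Kc_Ac_Br_i^2\gamma_j^2 .
\]
Summing over \(i,j\) factors the double sum and gives \(E_{\mathrm{bil}}\). Adding the three contributions gives \eqref{eq:nuc-leading-identity}.

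No step is a genuine obstacle; this is essentially \eqref{eq:nuc-background-master} specialized to \(v^A_{ik}=c_Ar_i^2\) and \(v^B_{kj}=c_B\gamma_j^2\). The point needing the most care is the bilinear term. There I must use that the \(K\) diagonal contributions each carry the same row and column variances, which produces the factor \(K\). I must also confirm that the off-diagonal terms vanish using independence across all entries of \(E_A\) and \(E_B\) jointly, not merely within each factor.
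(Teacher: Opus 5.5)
Your proposal is correct and follows the paper's proof essentially step for step: orthogonal invariance reduces everything to the three-term error $E_A\ket B+\ket AE_B+E_AE_B$, each cross term vanishes because it contains a lone centered error independent of the rest, and only the diagonal $k=k'$ contributions survive in each square, which produces the factor $K$ in the bilinear term. Your separate conditioning argument for the three cross terms is simply a more explicit version of the paper's one-line independence argument.
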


\begin{proof}
Orthogonal invariance gives
\[
 \fro{\that C-C}
 =\fro{E_A\ket B+\ket AE_B+E_AE_B}.
\]
Expand the square on the right.  Every mixed inner product has expectation zero
because each summand contains a
lone centered error independent of the remaining factors.  For the first
square,
\[
 \E\,\fro{E_A\ket B}^{2}
 =\sum_{i,j}\sum_{k,k'}
   \E[E^A_{ik}E^A_{ik'}]\ket B_{kj}\ket B_{k'j}.
\]
Only \(k=k'\) remains, so the first two squares are
\[
 \E\,\fro{E_A\ket B}^{2}
 =c_A\Big(\sum_i r_i^2\Big)\fro{\ket B}^{2},
 \qquad
 \E\,\fro{\ket AE_B}^{2}
 =c_B\Big(\sum_j\gamma_j^2\Big)\fro{\ket A}^{2}.
\]
In the bilinear square, independence between the two error matrices and within
their entries again leaves only equal contracted indices:
\[
 \E\,\fro{E_AE_B}^{2}
 =Kc_Ac_B\left(\sum_i r_i^2\right)
             \left(\sum_j\gamma_j^2\right).
\]
Adding the three squares proves \eqref{eq:nuc-leading-identity}.
\end{proof}

\subsection{High-rate ordering}

The term \(\Elead\) is first order in the common variance scale (quadratic in
grid width), whereas the bilinear term is second order.  Write
\(c_A=\eta\bar c_A\), \(c_B=\eta\bar c_B\), and, for a fixed gauge \(G\),
\begin{equation}
  E_\eta(G)=\eta\,\Lambda(G)+\eta^2\mathcal R(G),
  \qquad \mathcal R(G)\ge0,
  \label{eq:nuc-high-rate-functionals}
\end{equation}
where \(\Lambda\) is the leading
objective of \eqref{eq:nuc-leading-identity} at unit noise scale.  This
high-rate ordering need not be uniform over
\(\eta\)-dependent ill-conditioned gauges, so one cannot simply discard
\(\mathcal R\) before taking an infimum: an \(\eta\)-dependent gauge could make
the nominally higher-order contribution dominate.

The required asymptotic nevertheless follows from a sandwich.  Let
\(\lambda_*=\inf_G\Lambda(G)\).  Nonnegativity gives
\[
  \inf_G E_\eta(G)\ge\eta\lambda_*.
\]
For any \(\rho>0\), choose one fixed gauge \(G_\rho\) with
\(\Lambda(G_\rho)\le\lambda_*+\rho\).  Then
\[
 \inf_G E_\eta(G)
 \le E_\eta(G_\rho)
 \le \eta(\lambda_*+\rho)+\eta^2\mathcal R(G_\rho).
\]
Divide by \(\eta\), take \(\eta\downarrow0\), and then take
\(\rho\downarrow0\).  The two bounds give
\begin{equation}
  \inf_G E_\eta(G)=\eta\inf_G\Lambda(G)+o(\eta).
  \label{eq:nuc-high-rate-sandwich}
\end{equation}
Thus \(\Elead\) is the correct transform objective to first order, while
\(E_{\mathrm{bil}}\) remains part of the exact expected error at fixed grid
steps.
The limit in \eqref{eq:nuc-high-rate-sandwich} identifies the sharp leading
objective, while \Cref{cor:nuc-full-error-floor} gives a nonasymptotic lower
bound for the complete error.

\Cref{tab:nuc-notation} collects the symbols used by the remaining arguments.
\begin{table}[t]
\centering
\small
\caption{Core notation.}
\label{tab:nuc-notation}
\begin{tabular}{@{}ll@{}}
\toprule
Symbol & Meaning\\
\midrule
\(A,B,C=AB\) & factors and output in
\(\R^{m\times K},\R^{K\times n},\R^{m\times n}\)\\
\(i,j;k\) & free output indices; contracted inner index\\
\((U_L,T,U_R)\) & gauge in \(\Oo(m)\times\GL(K)\times\Oo(n)\)\\
\(r_i,\gamma_j\) & row and column ranges of transformed factors\\
\(c_A,c_B\) & normalized scalar-noise variances\\
\(P,Q\) & inner Gram matrices \(A^\top A\) and \(BB^\top\)\\
\(\Elead\) & leading expected squared product error\\
\bottomrule
\end{tabular}
\end{table}

\section{A gauge-invariant nuclear lower bound}
\label{sec:nuc-floor}

The second-moment identity reduces preprocessing to an optimization over the
ranges \(\{r_i\}\) and \(\{\gamma_j\}\).  Although these ranges depend on the
factorization, the leading error cannot fall below an output-only threshold.
The proof uses two individually standard inequalities; over the full gauge,
their composition is exact at Hadamard dimensions and within a factor of two
for every \(K\).

\begin{theorem}[Gauge-invariant nuclear lower bound]
\label{thm:nuc-floor}
Let \(A\in\R^{m\times K}\) and \(B\in\R^{K\times n}\).  Under
\Cref{ass:nuc-model}, every
\((U_L,T,U_R)\in\Oo(m)\times\GL(K)\times\Oo(n)\) satisfies
\begin{equation}
  \Elead(U_L,T,U_R)
  \ge \frac{c_A+c_B}{K}\,\nuc{AB}^{\,2}.
  \label{eq:nuc-floor}
\end{equation}
\end{theorem}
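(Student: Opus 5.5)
The plan is to start from the leading term identified in \Cref{lem:nuc-master},
\[
 \Elead=c_A\fro{\ket B}^{2}\sum_i r_i^2+c_B\fro{\ket A}^{2}\sum_j\gamma_j^2 ,
\]
and bound each summand separately by an output-only quantity. This requires two standard inequalities, applied in sequence.

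\textbf{Step 1: range versus energy.} The left inequality of \eqref{eq:nuc-range-energy} applies to each row \(\ket A_{i,:}\in\R^K\), giving \(r_i^2=\ii{\ket A_{i,:}}^2\ge \tw{\ket A_{i,:}}^2/K\). Summing over \(i\) yields \(\sum_i r_i^2\ge \fro{\ket A}^2/K\). In the same way, applying it to the columns of \(\ket B\) gives \(\sum_j\gamma_j^2\ge\fro{\ket B}^2/K\). Both inner vectors have length \(K\), since the rows of \(\ket A\) and the columns of \(\ket B\) are indexed by the contracted index. Substituting these bounds gives
\[
 \Elead\ge \frac{c_A+c_B}{K}\,\fro{\ket A}^2\fro{\ket B}^2 .
\]

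\textbf{Step 2: Frobenius product versus nuclear norm.} Here I invoke the inequality \(\nuc{XY}\le\fro X\fro Y\) recorded after \eqref{eq:nuc-factorization}, taking \(X=\ket A\) and \(Y=\ket B\). Their product is \(\ket A\ket B=U_LAT\,T^{-1}BU_R=U_L(AB)U_R\). Because the nuclear norm is orthogonally invariant, \(\nuc{U_LABU_R}=\nuc{AB}\). Hence \(\fro{\ket A}^2\fro{\ket B}^2\ge\nuc{AB}^2\), and combining this with Step 1 gives \eqref{eq:nuc-floor}.

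\textbf{Expected obstacle.} There is no real technical obstacle; the only point needing care is that Step 2 must not depend on \(\rank(AB)=K\) or on the minimizing factorization in \eqref{eq:nuc-factorization}. The inequality direction \(\nuc{XY}\le\fro X\fro Y\) holds for arbitrary conformable matrices. For a self-contained argument, I would prove it via trace duality: \(\nuc{XY}=\max_{\|W\|_{\mathrm{op}}\le1}\operatorname{tr}(W^\top XY)\). Applying Cauchy--Schwarz then gives \(\operatorname{tr}(W^\top XY)=\langle X^\top W,Y\rangle\le\fro{X^\top W}\fro Y\le\fro X\fro Y\). This keeps the bound valid for every gauge, including rank-deficient products.
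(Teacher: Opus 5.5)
Your proposal is correct and matches the paper's proof step for step: you sum the per-row and per-column range--energy inequality and substitute it into the leading term of the second-moment identity, then apply the Frobenius--nuclear product bound together with orthogonal invariance of the nuclear norm. The only addition is your self-contained trace-duality proof of $\lVert XY\rVert_*\le\lVert X\rVert_F\lVert Y\rVert_F$, where the paper cites the factorization identity; your argument is valid and makes explicit that the inequality needs no rank condition.
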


\begin{proof}
The bound composes two inequalities.  The first is elementary: for each row
of \(\ket A\), the range--energy inequality
\eqref{eq:nuc-range-energy} gives
\[
  \ii{\ket A_{i,:}}^2\ge
  \frac{1}{K}\tw{\ket A_{i,:}}^2.
\]
Summing over the rows of \(\ket A\) and columns of \(\ket B\) yields
\[
  \sum_i r_i^2\ge\frac1K\fro{\ket A}^2,
  \qquad
  \sum_j\gamma_j^2\ge\frac1K\fro{\ket B}^2.
\]
Substituting these inequalities into the leading part of
\eqref{eq:nuc-leading-identity} yields
\[
  \Elead
  \ge\frac{c_A+c_B}{K}
      \fro{\ket A}^2\fro{\ket B}^2.
\]
The second inequality is the Frobenius--nuclear product bound
\eqref{eq:nuc-factorization}; with orthogonal invariance it gives
\[
  \fro{\ket A}\fro{\ket B}
  \ge\nuc{\ket A\ket B}
  =\nuc{U_LABU_R}
  =\nuc{AB}.
\]
Squaring this inequality proves \eqref{eq:nuc-floor}.
\end{proof}

Neither step uses the gauge beyond orthogonal invariance, and each is
individually sharp: a row whose coordinates have equal magnitude makes the
first inequality an equality, and a factorization with
\(\fro X\fro Y=\nuc{XY}\) makes the second one an equality.  The bound is
therefore not an artifact of loose estimation.  Simultaneous tightness is
addressed by \Cref{thm:nuc-full-gauge}, which proves equality at Hadamard
dimensions and a factor-two match for every \(K\).

\begin{corollary}[Complete-error lower bound]
\label{cor:nuc-full-error-floor}
Under \Cref{ass:nuc-model}, every gauge satisfies
\begin{equation}
 \E\,\fro{\that C-C}^{2}
 \ge
 \frac{c_A+c_B+c_Ac_B}{K}\,\nuc{AB}^{\,2}.
 \label{eq:nuc-full-error-floor}
\end{equation}
This nonasymptotic bound applies to the exact expected error.
\end{corollary}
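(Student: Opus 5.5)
The plan is to bound the leading and bilinear parts of the second-moment identity of \Cref{lem:nuc-master} separately, reusing the same two inequalities as in the proof of \Cref{thm:nuc-floor}. Since \(\E\,\fro{\that C-C}^2=\Elead+E_{\mathrm{bil}}\) exactly, \Cref{thm:nuc-floor} already gives \(\Elead\ge (c_A+c_B)\nuc{AB}^2/K\). It therefore suffices to show
\[
  E_{\mathrm{bil}}\ge \frac{c_Ac_B}{K}\,\nuc{AB}^{2}
\]
for every gauge and then add the two bounds.

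For the bilinear term, I apply the range--energy inequality \eqref{eq:nuc-range-energy} row by row and column by column, exactly as in the proof of \Cref{thm:nuc-floor}. This gives
\[
  \sum_i r_i^2\ge \frac1K\fro{\ket A}^2,\qquad \sum_j\gamma_j^2\ge\frac1K\fro{\ket B}^2 .
\]
Substituting into \(E_{\mathrm{bil}}=Kc_Ac_B(\sum_i r_i^2)(\sum_j\gamma_j^2)\) yields
\[
  E_{\mathrm{bil}}\ge \frac{c_Ac_B}{K}\fro{\ket A}^2\fro{\ket B}^2 .
\]
The Frobenius--nuclear product bound \eqref{eq:nuc-factorization} then gives \(\fro{\ket A}\fro{\ket B}\ge\nuc{\ket A\ket B}\). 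Orthogonal invariance gives \(\nuc{\ket A\ket B}=\nuc{U_LABU_R}=\nuc{AB}\). Squaring completes the bilinear bound.

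There is no real obstacle here. The only point to check is the power of \(K\): the factor \(K\) in \(E_{\mathrm{bil}}\) cancels one of the two factors \(1/K\) coming from the range--energy inequalities, leaving exactly \(1/K\), the same scaling as the leading bound. Every step holds for each fixed gauge, with no asymptotics in the noise scale, so the resulting inequality is nonasymptotic, as the corollary claims.
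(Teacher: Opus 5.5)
Your proposal is correct and matches the paper's proof. The paper also bounds \(E_{\mathrm{bil}}\) from below by \(\frac{c_Ac_B}{K}\nuc{AB}^{2}\) using the same row/column range--energy inequalities and the Frobenius--nuclear product bound, and then adds this to \Cref{thm:nuc-floor} via the exact decomposition of \Cref{lem:nuc-master}.
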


\begin{proof}
The range--energy bounds from the theorem give
\[
 E_{\mathrm{bil}}
 =Kc_Ac_B\left(\sum_i r_i^2\right)
             \left(\sum_j\gamma_j^2\right)
 \ge \frac{c_Ac_B}{K}\fro{\ket A}^{2}\fro{\ket B}^{2}
 \ge \frac{c_Ac_B}{K}\nuc{AB}^{\,2}.
\]
Adding this inequality to \eqref{eq:nuc-floor} proves
\eqref{eq:nuc-full-error-floor}.  At Hadamard dimensions, the construction in
\Cref{thm:nuc-full-gauge} makes both bounds tight when
\(\operatorname{rank}(AB)=K\) and asymptotically tight when
\(\operatorname{rank}(AB)<K\).  In the latter case, the asymptotic qualifier
comes only from the limiting factorization balance: the retained Hadamard rows
and columns are already exactly flat.
\end{proof}

The proof leaves two gaps: factorization energy may exceed the output nuclear
norm, and squared range may exceed average energy.  The next sections close the
factorization-energy gap and control the range--energy gap.

\section{Balancing the Gram matrices}
\label{sec:nuc-balancing}

Our first step toward the lower bound is to remove directional imbalance from
the factorization.  One factor may load a latent direction heavily while the
other loads it lightly; a compensating inner transformation preserves the
product but can reduce the resulting factorization-energy cost.  Let
\(P=A^\top A\) and \(Q=BB^\top\) be the inner Gram matrices, which record how
the two factors distribute energy along the contracted dimension.  Balancing
them makes the factors share the magnitudes of the latent features and
minimizes the factorization-energy loss.  This leaves one source of excess
error: coordinate spikes that convert energy into unnecessarily wide grids.

\begin{lemma}[Gram-matrix balancing]
\label{lem:nuc-balance}
For arbitrary \(A\in\R^{m\times K}\) and \(B\in\R^{K\times n}\),
\begin{equation}
 \inf_{T\in\GL(K)}
 \fro{AT}^{2}\fro{T^{-1}B}^{2}
 =\nuc{AB}^{\,2}.
 \label{eq:nuc-balance-value}
\end{equation}
Equivalently,
\(
 \inf_{T\in\GL(K)}\fro{AT}\fro{T^{-1}B}=\nuc{AB}.
\)
If \(P,Q\succ0\), the infimum is attained by any \(T\) satisfying
\begin{equation}
 TT^\top
 =\beta P^{-1/2}
  (P^{1/2}QP^{1/2})^{1/2}
  P^{-1/2},
 \qquad \beta>0.
 \label{eq:nuc-balancing-transform}
\end{equation}
If either Gram matrix is singular, equality still holds as an infimum but may not
be attained at finite condition number.
\end{lemma}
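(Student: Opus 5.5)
The proof splits into a lower bound, an explicit construction for positive-definite Gram matrices, and a limiting argument for the singular case.

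\textbf{Lower bound.} This is immediate from the Frobenius--nuclear product bound \eqref{eq:nuc-factorization}. For every \(T\in\GL(K)\) we have \((AT)(T^{-1}B)=AB\), so \(\fro{AT}\fro{T^{-1}B}\ge\nuc{AB}\). Hence the infimum in \eqref{eq:nuc-balance-value} is at least \(\nuc{AB}^2\).

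\textbf{Attainment when \(P,Q\succ0\).} Write \(S=TT^\top\). Then
\[
\fro{AT}^2=\operatorname{tr}(T^\top PT)=\operatorname{tr}(PS),\qquad
\fro{T^{-1}B}^2=\operatorname{tr}(S^{-1}Q).
\]
The objective therefore depends only on \(S\succ0\). Set \(M=P^{1/2}QP^{1/2}\) and \(S=\beta P^{-1/2}M^{1/2}P^{-1/2}\). Then
\[
\operatorname{tr}(PS)=\beta\operatorname{tr}(M^{1/2}),\qquad
S^{-1}=\beta^{-1}P^{1/2}M^{-1/2}P^{1/2},
\]
and
\[
\operatorname{tr}(S^{-1}Q)=\beta^{-1}\operatorname{tr}(M^{-1/2}P^{1/2}QP^{1/2})=\beta^{-1}\operatorname{tr}(M^{1/2}).
\]
The product is \((\operatorname{tr}M^{1/2})^2\). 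It remains to identify \(\operatorname{tr}M^{1/2}\) with \(\nuc{AB}\). The eigenvalues of \(M=P^{1/2}QP^{1/2}\) coincide with those of \(PQ=A^\top ABB^\top\). By cyclicity, the nonzero ones coincide with the nonzero eigenvalues of \(ABB^\top A^\top=(AB)(AB)^\top\), which are the \(\sigma_\ell(AB)^2\). So \(\operatorname{tr}M^{1/2}=\sum_\ell\sigma_\ell=\nuc{AB}\). Any \(T\) with \(TT^\top=S\) (for example, \(T=S^{1/2}\) times any orthogonal matrix) attains the bound, which matches \eqref{eq:nuc-balancing-transform}.

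\textbf{Singular case.} This is the main obstacle. The plan is to regularize with \(P_\epsilon=P+\epsilon I\) and \(Q_\epsilon=Q+\epsilon I\), both positive definite. Take \(S_\epsilon\) to be the balancing matrix built from \((P_\epsilon,Q_\epsilon)\). Since \(P\preceq P_\epsilon\) and \(Q\preceq Q_\epsilon\), and trace against a positive-definite matrix is monotone,
\[
\operatorname{tr}(PS_\epsilon)\operatorname{tr}(S_\epsilon^{-1}Q)\le\operatorname{tr}(P_\epsilon S_\epsilon)\operatorname{tr}(S_\epsilon^{-1}Q_\epsilon)=\bigl(\operatorname{tr}(P_\epsilon^{1/2}Q_\epsilon P_\epsilon^{1/2})^{1/2}\bigr)^2.
\]
As \(\epsilon\downarrow0\), continuity of eigenvalues and of the square root gives convergence to \((\operatorname{tr}(P^{1/2}QP^{1/2})^{1/2})^2\). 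The same eigenvalue argument as above, which needs no invertibility, shows this limit equals \(\nuc{AB}^2\).

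The infimum therefore equals \(\nuc{AB}^2\), and the last sentence of the statement follows. For example, with \(B=0\) and \(A\neq0\), the objective is \(0\) at every \(T\), so that case is harmless. In general, however, \(S_\epsilon\) may degenerate as \(\epsilon\to0\), so the infimum need not be attained at finite condition number. I would note this rather than prove non-attainment, since the statement only says the infimum "may not" be attained.
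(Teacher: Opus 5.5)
Your proof is correct. The singular case is essentially the paper's argument: regularize with $P_\varepsilon,Q_\varepsilon$, use $P\preceq P_\varepsilon$ and $Q\preceq Q_\varepsilon$ for the upper bound and the Frobenius--nuclear bound for the lower bound, then pass to the limit.

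Where you differ is the positive-definite case.
\begin{itemize}
\item You take the lower bound from \eqref{eq:nuc-factorization}. You then substitute the candidate $S=\beta P^{-1/2}M^{1/2}P^{-1/2}$ and verify that it achieves $(\operatorname{tr}M^{1/2})^2$.
\item The paper changes variables to $R=P^{1/2}SP^{1/2}$ and applies trace Cauchy--Schwarz, $\operatorname{tr}(M^{1/2})\le\operatorname{tr}(R)^{1/2}\operatorname{tr}(MR^{-1})^{1/2}$. This gives the lower bound directly from the trace objective, without the external inequality. Its equality case also shows that the minimizers are \emph{exactly} the $S$ with $R=\beta M^{1/2}$.
\end{itemize}

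Your route is shorter and uses a single lower bound for both the positive-definite and singular cases. However, it establishes only sufficiency of \eqref{eq:nuc-balancing-transform}. That is all the statement asserts, so nothing is missing. The paper's route additionally derives the balancing transform rather than positing it, and rules out any other minimizers.

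Your explicit $XY$/$YX$ eigenvalue argument for $\operatorname{tr}M^{1/2}=\nuc{AB}$ is a useful detail. It requires no invertibility, and the paper only asserts this identity while also relying on it implicitly in the singular limit.
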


\begin{proof}
First suppose \(P,Q\succ0\) and set \(S=TT^\top\succ0\).  Cyclicity of the
trace gives
\[
  \fro{AT}^{2}=\tr(PS),\qquad
  \fro{T^{-1}B}^{2}=\tr(QS^{-1}).
\]
With \(R=P^{1/2}SP^{1/2}\) and
\(M=P^{1/2}QP^{1/2}\), the objective becomes
\(\tr(R)\tr(MR^{-1})\).  Trace Cauchy--Schwarz applies without a commutativity
assumption:
\[
 \tr(M^{1/2})
 =\tr\!\left(R^{1/2}R^{-1/2}M^{1/2}\right)
 \le \tr(R)^{1/2}\tr(MR^{-1})^{1/2}.
\]
Equality holds precisely when
    \(R=\beta M^{1/2}\), which reverses to
\eqref{eq:nuc-balancing-transform}.  The nonzero eigenvalues of \(M\) are the
squared singular values of \(AB\), so
\(\tr(M^{1/2})=\nuc{AB}\).  This proves both the value and attainment in the
positive-definite case.

For singular Gram matrices, define
\[
  P_\varepsilon=P+\varepsilon I,\qquad
  Q_\varepsilon=Q+\varepsilon I,
\]
and let \(S_\varepsilon\) attain the regularized positive-definite problem.  Its
value is
\[
 v_\varepsilon=
 \left[
 \tr\!\left(
 (P_\varepsilon^{1/2}Q_\varepsilon
 P_\varepsilon^{1/2})^{1/2}
 \right)
 \right]^2.
\]
For every \(S\succ0\), the Frobenius--nuclear product inequality gives
\(\tr(PS)\tr(QS^{-1})\ge\nuc{AB}^2\).  Conversely,
\(P\preceq P_\varepsilon\) and \(Q\preceq Q_\varepsilon\) imply
\[
 \inf_{S\succ0}\tr(PS)\tr(QS^{-1})
 \le \tr(PS_\varepsilon)\tr(QS_\varepsilon^{-1})
 \le v_\varepsilon.
\]
Continuity of the positive-semidefinite square root gives
\[
 v_\varepsilon\longrightarrow
 \left[\tr\!\left(
 (P^{1/2}QP^{1/2})^{1/2}
 \right)\right]^2
 =\nuc{AB}^{\,2}.
\]
The lower and upper bounds squeeze the singular infimum to the same value.
The associated \(S_\varepsilon\), and hence \(T_\varepsilon\), may become
ill-conditioned, which explains the possible failure of attainment.
\end{proof}

Balancing solves the factorization-energy objective but not the scalar range
objective.  A row can still concentrate most of its energy in one coordinate,
inflating the squared grid step by as much as \(K\) (the step itself by
\(\sqrt K\)).  In
the two-by-two example of \Cref{sec:nuc-intro}, the balance
\(T=\operatorname{diag}(1/2,2)\) attains the infimum
\eqref{eq:nuc-balance-value}---the balanced factors satisfy
\(\fro{\ket A}\fro{\ket B}=8=\nuc{AB}\)---yet the leading objective remains
twice the floor, entirely because the rows of \(\ket A\) and columns of
\(\ket B\) are coordinate spikes.
The next section first closes this loss over the full gauge and then studies
the inner-only restriction.

\section{Full-gauge and inner-only achievability}
\label{sec:nuc-achievability}

The lower bound becomes an exact or constant-factor characterization once outer
rotations are allowed.  When those rotations are unavailable, Gram-matrix
balancing followed by finite-set flattening gives a logarithmic inner-only
bound.  We state the two regimes separately.

\subsection{Constant achievability over the full gauge}

For \(H\in\Oo(K)\), write
\[
 \lVert H\rVert_{\max}=\max_{a,b}|H_{ab}|,
 \qquad L_H=K\lVert H\rVert_{\max}^{2}.
\]

\begin{theorem}[Full-gauge constant achievability]
\label{thm:nuc-full-gauge}
Let \(C=AB\ne0\) have rank \(r\) and thin SVD
\(C=U\Sigma V^\top\).  For every \(H\in\Oo(K)\),
\begin{equation}
 \inf_{\substack{U_L\in\Oo(m),\,T\in\GL(K),\\U_R\in\Oo(n)}}
 \Elead(U_L,T,U_R)
 \le
 L_H\,\frac{c_A+c_B}{K}\nuc C^{\,2}.
 \label{eq:nuc-full-gauge-bound}
\end{equation}
If \(r=K\), the bound is witnessed by a single finite gauge.  If \(r<K\), the
same bound holds as an infimum and finite-condition-number attainment may fail.
Consequently, a normalized Hadamard matrix makes the infimum equal to
\eqref{eq:nuc-floor} whenever one of order \(K\) exists, with finite attainment
guaranteed when \(r=K\); the normalized DCT-II gives a factor at most two for
every \(K\).
\end{theorem}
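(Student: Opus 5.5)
The plan is an explicit SVD-aligned gauge: rotate the free indices onto the singular vectors of \(C\), split each singular value evenly between the two factors, and spread every retained singular direction across all \(K\) inner coordinates with \(H\). Then \(\Elead\) is bounded using \Cref{lem:nuc-master}. The Hadamard and DCT-II consequences follow from \(L_H=1\) and \(L_H\le2\).

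\emph{Full-rank case \(r=K\).} Here \(A\) has full column rank, \(B\) has full row rank, and \(m,n\ge K\). Define the target factorization
\[
X=U\Sigma^{1/2}H,\qquad Y=H^\top\Sigma^{1/2}V^\top,\qquad XY=C.
\]
\begin{enumerate}
\item Set \(T=A^{+}X\). Since \(\operatorname{col}(A)=\operatorname{col}(C)=\operatorname{col}(U)\supseteq\operatorname{col}(X)\), this gives \(AT=X\).
\item \(T\) is invertible because \(\rank X=K\).
\item From \(A(TY-B)=XY-C=0\) and injectivity of \(A\), we get \(T^{-1}B=Y\).
\item Choose \(U_L\in\Oo(m)\) with \(U_LU=\bigl[\begin{smallmatrix}I_K\\0\end{smallmatrix}\bigr]\), and \(U_R\in\Oo(n)\) with \(V^\top U_R=[I_K\;0]\).
\end{enumerate}
The transformed factors are then
\[
\ket A=\begin{bmatrix}\Sigma^{1/2}H\\0\end{bmatrix},\qquad \ket B=\bigl[H^\top\Sigma^{1/2}\;\;0\bigr].
\]
The range and energy bounds are:
\begin{itemize}
\item Row \(i\le K\) of \(\ket A\) has entries \(\sigma_i^{1/2}H_{ik}\), so \(r_i^2\le\sigma_i\lVert H\rVert_{\max}^2\). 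The remaining rows vanish. Hence \(\sum_ir_i^2\le\lVert H\rVert_{\max}^2\nuc C\).
\item Symmetrically, \(\sum_j\gamma_j^2\le\lVert H\rVert_{\max}^2\nuc C\).
\item Because \(H\) is orthogonal, \(\fro{\ket A}^2=\fro{\ket B}^2=\tr\Sigma=\nuc C\).
\end{itemize}
Substituting into \(\Elead\) from \Cref{lem:nuc-master} gives
\[
\Elead\le(c_A+c_B)\lVert H\rVert_{\max}^2\nuc C^2=L_H\frac{c_A+c_B}{K}\nuc C^2 .
\]
This is attained by a single finite gauge.

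\emph{Rank-deficient case \(r<K\).} This is the main obstacle: \(X\) built from only \(r\) retained directions need not lie in the inner orbit of \((A,B)\). The plan is to mimic the regularization in \Cref{lem:nuc-balance}.
\begin{enumerate}
\item Choose \(T_0\in\GL(K)\) adapted to the subspaces \(\operatorname{row}(A)\), \(\operatorname{col}(B)\) and their intersection. After this change of basis the product is carried by an \(r\)-dimensional block: in suitable coordinates \(AT_0=[A_1\;A_2]\) and \(T_0^{-1}B=\bigl[\begin{smallmatrix}B_1\\B_2\end{smallmatrix}\bigr]\), with \(A_1B_1=C\) and \(A_2B_2=0\).
\item Ideally also \(A_2\) or \(B_2\) vanishes on complementary pieces. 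The cross-terms between non-contributing directions must cancel in the product; this is what the subspace splitting is meant to guarantee.
\item Apply the block scaling \(D_\varepsilon=\operatorname{diag}(I_r,\varepsilon I,\varepsilon^{-1}I)\), arranged so that directions seen only by \(A\) are shrunk in \(\ket A\) and directions seen only by \(B\) are shrunk in \(\ket B\). The product is unchanged.
\item On the \(r\)-block, apply the full-rank construction above to \(A_1,B_1\).
\item Then mix all \(K\) coordinates with \(H\). The retained rows become \(\sigma_i^{1/2}H_{ik}\) up to \(O(\varepsilon)\) perturbations.
\end{enumerate}
Ranges and Frobenius norms are continuous, so \(\Elead\) converges to at most the same bound \(L_H\frac{c_A+c_B}{K}\nuc C^2\) as \(\varepsilon\to0\). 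The transform is \(\varepsilon\)-dependent and ill-conditioned, which matches the "infimum only" qualifier in the statement. The delicate point is verifying that the subspace splitting really kills every cross-term.

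\emph{Consequences.} For a normalized Hadamard matrix, all entries equal \(\pm K^{-1/2}\), so \(L_H=1\). Combined with \Cref{thm:nuc-floor}, the infimum equals the floor, with finite attainment when \(r=K\). For the orthonormal DCT-II:
\begin{itemize}
\item the entries satisfy \(|H_{ab}|\le\sqrt{2/K}\) (the first row is \(1/\sqrt K\));
\item hence \(L_H\le2\) for every \(K\), which gives the factor-two bound.
\end{itemize}
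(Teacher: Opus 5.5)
Your full-rank case is correct and is the paper's construction in different notation. With the paper's \(N=U^\top A\in\GL(K)\) one has \(A^{+}=N^{-1}U^\top\), so your \(T=A^{+}U\Sigma^{1/2}H\) equals the paper's \(N^{-1}\Sigma^{1/2}H\). The range and energy bookkeeping and the Hadamard and DCT-II consequences also match.

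The gap is in the case \(r<K\), exactly where you flag it. The structure you extract in step~1 (\(A_1B_1=C\), \(A_2B_2=0\)) does not make any block scaling work. Take \(A=(1,1,1)\) (a row) and \(B=(1,1,-1)^\top\). Already \(T_0=I\) gives \(A_1=B_1=1\), \(A_2=(1,1)\), \(B_2=(1,-1)^\top\) and \(A_2B_2=0\). Yet for every invertible diagonal \(D\), Cauchy--Schwarz on entrywise absolute values gives \(\lVert A_2D\rVert_2^2\lVert D^{-1}B_2\rVert_2^2\ge4\). So whatever you do on the \(r\)-block, the energy product stays at least \(9\), while \(\nuc{C}^2=1\). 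Nor are \(\operatorname{row}(A)\), \(\operatorname{col}(B)\) and \(\operatorname{row}(A)\cap\operatorname{col}(B)\) the right subspaces: for \(A=(1,1)\), \(B=(1,0)^\top\) one has \(r=1\) but \(\operatorname{row}(A)\cap\operatorname{col}(B)=\{0\}\). The scaling needs every inner coordinate outside an \(r\)-block to have an identically zero \(A\)-column or an identically zero \(B\)-row. Your step~2 hopes for this, but it is the missing lemma, not a consequence of step~1.

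The lemma comes from \(\mathcal N=\ker A\) and \(\mathcal W=\operatorname{col}(B)\). Choose \(\mathcal W_1\) with \(\mathcal W=\mathcal W_1\oplus(\mathcal W\cap\mathcal N)\), \(\mathcal N_1\) with \(\mathcal N=(\mathcal W\cap\mathcal N)\oplus\mathcal N_1\), and a complement \(\mathcal Z\) of \(\mathcal W+\mathcal N\). Let the columns of \(T_0\) be a basis adapted to \(\R^K=\mathcal W_1\oplus(\mathcal W\cap\mathcal N)\oplus\mathcal N_1\oplus\mathcal Z\). Then \(AT_0=[A_1\ \ 0\ \ 0\ \ A_{\mathcal Z}]\), and \(T_0^{-1}B\) has row blocks \(B_1,\ B_{\mathcal W\cap\mathcal N},\ 0,\ 0\), with exact zeros. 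Moreover \(A_1B_1=C\), \(A_1\) is injective because \(\mathcal W_1\cap\mathcal N=\{0\}\), and \(\dim\mathcal W_1=\dim\mathcal W-\dim(\mathcal W\cap\mathcal N)=r\). The scaling \(\operatorname{diag}(I_r,\varepsilon^{-1}I,I,\varepsilon I)\) multiplies \(A_{\mathcal Z}\) and \(B_{\mathcal W\cap\mathcal N}\) by \(\varepsilon\), with no cancellation required. Your remaining steps (full-rank construction on \((A_1,B_1)\), then \(H\), then continuity) then go through and reach the same limit factors as the paper.

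With this repair your route differs genuinely from the paper's. The paper rescales the balancing sequence of \Cref{lem:nuc-balance} and extracts a convergent subsequence. It then uses the equality cases of the projection and trace Cauchy--Schwarz bounds to show every limit is \(U\Sigma^{1/2}R^\top\), \(R\Sigma^{1/2}V^\top\) with \(R^\top R=I_r\), and inserts \(\bar RH\). That avoids subspace bookkeeping and reuses the lemma, but it is nonconstructive. Yours yields an explicit gauge family with \(O(\varepsilon)\) convergence and explicit conditioning.
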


\begin{proof}
First suppose \(r=K\).  Then
\(\operatorname{range}(A)=\operatorname{range}(U)\), so
\(A=UN\) for \(N=U^\top A\in\GL(K)\), and
\(B=N^{-1}\Sigma V^\top\).  Set
\(T_0=N^{-1}\Sigma^{1/2}\).  Choose orthogonal \(U_L,U_R\) by extending
\(U,V\) to bases so that
\[
 U_LU=\begin{bmatrix}I_K\\0\end{bmatrix},
 \qquad
 V^\top U_R=\begin{bmatrix}I_K&0\end{bmatrix}.
\]
With \(T=T_0H\), the transformed factors are
\[
 \ket A=
 \begin{bmatrix}\Sigma^{1/2}H\\0\end{bmatrix},
 \qquad
 \ket B=
 \begin{bmatrix}H^\top\Sigma^{1/2}&0\end{bmatrix}.
\]
Therefore
\[
 \sum_i r_i^2\le\lVert H\rVert_{\max}^{2}\tr\Sigma,\qquad
 \sum_j\gamma_j^2\le\lVert H\rVert_{\max}^{2}\tr\Sigma,
 \qquad
 \fro{\ket A}^{2}=\fro{\ket B}^{2}=\tr\Sigma=\nuc C.
\]
Substitution into \eqref{eq:nuc-leading-identity} proves
\eqref{eq:nuc-full-gauge-bound}.  A normalized Hadamard matrix has
\(\lVert H\rVert_{\max}^{2}=1/K\), so both range--energy inequalities are
equalities.  A normalized DCT-II has
\(\lVert H\rVert_{\max}^{2}\le2/K\).

Now let \(r<K\).  The same geometry appears only as a limit: an
energy-minimizing sequence converges to an SVD-aligned factorization, after
which \(H\) spreads the retained singular directions.  Take a sequence \(T_\ell\) from
\Cref{lem:nuc-balance} and rescale each \(T_\ell\) by a positive scalar so that
\[
 \fro{AT_\ell}^{2}=\fro{T_\ell^{-1}B}^{2}\longrightarrow\nuc C.
\]
The two factor sequences are bounded.  Along a subsequence they converge to
\(X,Y\) with \(XY=C\) and
\(\fro X^{2}=\fro Y^{2}=\nuc C=:s\).  Moreover,
\[
 s=\tr(U^\top XYV)
 \le \fro{U^\top X}\fro{YV}
 \le \fro X\fro Y=s.
\]
Equality in the projection bounds gives \(X=UX_0\) and \(Y=Y_0V^\top\).
Equality in trace Cauchy--Schwarz, together with the equal norms, gives
\(Y_0=X_0^\top\), while \(XY=C\) gives \(X_0X_0^\top=\Sigma\).  Hence
\[
 X=U\Sigma^{1/2}R^\top,\qquad
 Y=R\Sigma^{1/2}V^\top
\]
for some \(R\in\R^{K\times r}\) with \(R^\top R=I_r\).  Complete \(R\) to
\(\bar R=[R,R_\perp]\in\Oo(K)\), and choose \(U_L,U_R\) from orthogonal
extensions of \(U,V\).  The gauges
\((U_L,T_\ell\bar R H,U_R)\) converge in transformed-factor space to
\[
 \begin{bmatrix}\Sigma^{1/2}[I_r\ 0]H\\0\end{bmatrix},
 \qquad
 \begin{bmatrix}H^\top[I_r\ 0]^\top\Sigma^{1/2}&0\end{bmatrix}.
\]
When \(H\) is Hadamard, \([I_r\ 0]H\) selects \(r\) complete Hadamard rows.
The retained rows and columns are therefore still exactly flat; only the
factorization-energy balance requires a limiting sequence.
Continuity of the full leading expression in the transformed factors gives the
same upper bound as an infimum.  Combining the Hadamard case with
\Cref{thm:nuc-floor} proves equality of the infimum with the lower bound.
\end{proof}

\begin{remark}[Complete error under the same construction]
\label{rem:nuc-full-gauge-complete}
The construction with flatness factor \(L_H\) also gives
\[
 \E\,\fro{\that C-C}^{2}
 \le
 \frac{L_H(c_A+c_B)+L_H^2c_Ac_B}{K}\nuc C^{\,2}
 +o_\ell(1)
\]
in singular cases.  Thus the bilinear upper contribution carries the square of
the flattening factor.  Hadamard gauges attain or approach the complete-error
lower bound \eqref{eq:nuc-full-error-floor}; the DCT factor-two statement applies
to the leading term, not to the complete error.
\end{remark}

\subsection{Logarithmic achievability by inner transforms}

We now restrict to \(U_L=I\) and \(U_R=I\), as required when free indices carry
meaning and may not be rotated.  For \(T\in\GL(K)\), define
\begin{equation}
\begin{split}
 \Phi_\infty(T)={}&
 c_A\fro{T^{-1}B}^{2}
      \sum_i\ii{(AT)_{i,:}}^2\\
 &+c_B\fro{AT}^{2}
      \sum_j\ii{(T^{-1}B)_{:,j}}^2.
\end{split}
\label{eq:nuc-range-objective}
\end{equation}
Thus \(\Phi_\infty(T)=\Elead(I,T,I)\); we use \(\Phi_\infty\) when only
the inner transform varies.

\begin{theorem}[Inner-only bound from finite-set flattening]
\label{thm:nuc-achievability}
For each \(\varepsilon>0\), set
\[
 P_\varepsilon=A^\top A+\varepsilon I,\qquad
 Q_\varepsilon=BB^\top+\varepsilon I,
\]
and choose \(T_\varepsilon\) with
\begin{equation}
\begin{split}
 T_\varepsilon T_\varepsilon^\top
 ={}&P_\varepsilon^{-1/2}
 (P_\varepsilon^{1/2}Q_\varepsilon
  P_\varepsilon^{1/2})^{1/2}
 P_\varepsilon^{-1/2}.
\end{split}
\label{eq:nuc-regularized-transform}
\end{equation}
Let \(x_i^{(\varepsilon)}\) be the rows of \(AT_\varepsilon\) and
\(y_j^{(\varepsilon)}\) the columns of \(T_\varepsilon^{-1}B\).  Suppose a
family \(W_\varepsilon\in\Oo(K)\) satisfies

\begin{equation}
\begin{split}
 \ii{x_i^{(\varepsilon)}W_\varepsilon}^{2}
 &\le \frac{L}{K}\tw{x_i^{(\varepsilon)}}^{2},\\
 \ii{W_\varepsilon^\top y_j^{(\varepsilon)}}^{2}
 &\le \frac{L}{K}\tw{y_j^{(\varepsilon)}}^{2}
\end{split}
\label{eq:nuc-flattening-hypothesis}
\end{equation}
with a single numerical constant \(L\) independent of \(\varepsilon\).  Then,
as \(\varepsilon\downarrow0\),
\begin{equation}
 \Elead(I,T_\varepsilon W_\varepsilon,I)
 \le
 L\,\frac{c_A+c_B}{K}\nuc{AB}^{\,2}
 +o_\varepsilon(1).
 \label{eq:nuc-achievability-bound}
\end{equation}
\end{theorem}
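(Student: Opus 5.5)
The plan is to substitute the flattening hypothesis \eqref{eq:nuc-flattening-hypothesis} directly into the definition \eqref{eq:nuc-range-objective} of \(\Phi_\infty\) at \(T=T_\varepsilon W_\varepsilon\). This reduces the leading objective to a multiple of the factorization energy \(\fro{AT_\varepsilon}^2\fro{T_\varepsilon^{-1}B}^2\). We then control that energy by the regularized balancing argument from the proof of \Cref{lem:nuc-balance}.

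\textbf{Step 1 (orthogonal invariance).} Since \(W_\varepsilon\in\Oo(K)\),
\[
\fro{AT_\varepsilon W_\varepsilon}=\fro{AT_\varepsilon},\qquad \fro{W_\varepsilon^\top T_\varepsilon^{-1}B}=\fro{T_\varepsilon^{-1}B}.
\]
The rows of \(AT_\varepsilon W_\varepsilon\) are \(x_i^{(\varepsilon)}W_\varepsilon\), and the columns of \((T_\varepsilon W_\varepsilon)^{-1}B\) are \(W_\varepsilon^\top y_j^{(\varepsilon)}\).

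\textbf{Step 2 (flattening).} Summing \eqref{eq:nuc-flattening-hypothesis} over \(i\) and over \(j\) gives
\[
\sum_i r_i^2\le \tfrac{L}{K}\fro{AT_\varepsilon}^2,\qquad \sum_j\gamma_j^2\le\tfrac{L}{K}\fro{T_\varepsilon^{-1}B}^2.
\]
Substituting these into \eqref{eq:nuc-range-objective} yields
\[
\Elead(I,T_\varepsilon W_\varepsilon,I)\le L\frac{c_A+c_B}{K}\,\fro{AT_\varepsilon}^2\fro{T_\varepsilon^{-1}B}^2 .
\]
The constant \(L\) does not depend on \(\varepsilon\), so it suffices to show that the energy product tends to \(\nuc{AB}^2\).

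\textbf{Step 3 (energy convergence).} Set \(S_\varepsilon=T_\varepsilon T_\varepsilon^\top\). Then the energy product equals \(\tr(PS_\varepsilon)\tr(QS_\varepsilon^{-1})\) with \(P=A^\top A\) and \(Q=BB^\top\). Because \(P\preceq P_\varepsilon\) and \(Q\preceq Q_\varepsilon\), it is at most \(\tr(P_\varepsilon S_\varepsilon)\tr(Q_\varepsilon S_\varepsilon^{-1})\). By the positive-definite case of \Cref{lem:nuc-balance} with \(\beta=1\), this equals
\[
v_\varepsilon=\bigl[\tr\bigl((P_\varepsilon^{1/2}Q_\varepsilon P_\varepsilon^{1/2})^{1/2}\bigr)\bigr]^2 .
\]
Continuity of the PSD square root gives \(v_\varepsilon\to[\tr((P^{1/2}QP^{1/2})^{1/2})]^2=\nuc{AB}^2\). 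Hence the energy product is \(\nuc{AB}^2+o_\varepsilon(1)\). Multiplying by the fixed factor \(L(c_A+c_B)/K\) preserves the \(o_\varepsilon(1)\) remainder, which proves \eqref{eq:nuc-achievability-bound}.

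\textbf{Main obstacle.} The one point needing care is Step 3's identification of the limit, namely that \(\tr((P^{1/2}QP^{1/2})^{1/2})=\nuc{AB}\) even when the Gram matrices are singular. I would handle it as in the lemma. The matrix \(P^{1/2}QP^{1/2}\) has the same nonzero spectrum as \(QP=BB^\top A^\top A\). Its nonzero eigenvalues therefore coincide with those of \((AB)^\top(AB)\), namely \(\sigma_\ell(AB)^2\).

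A second point is whether the hypothesis constant \(L\) is really uniform in \(\varepsilon\). This is assumed in the statement, so no argument is needed here. The conditioning of \(T_\varepsilon\) can blow up, but it never enters the bound.
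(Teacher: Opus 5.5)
Your proposal is correct and follows the paper's proof. You sum the flattening hypothesis over rows and columns and use orthogonal invariance to reduce the objective to $L\frac{c_A+c_B}{K}\fro{AT_\varepsilon}^2\fro{T_\varepsilon^{-1}B}^2$; the only difference is that the paper cites the Gram-balancing lemma for the energy convergence, whereas you unpack its regularized step ($P\preceq P_\varepsilon$, $Q\preceq Q_\varepsilon$, value $v_\varepsilon$, continuity of the square root), which suffices because only the upper side is needed.
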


\begin{proof}
Set \(X_\varepsilon=AT_\varepsilon\) and
\(Y_\varepsilon=T_\varepsilon^{-1}B\).  Under
\eqref{eq:nuc-flattening-hypothesis}, summing over the relevant rows and columns
gives
\[
 \sum_i\ii{(X_\varepsilon W_\varepsilon)_{i,:}}^2
 \le\frac{L}{K}\fro{X_\varepsilon}^{2},\qquad
 \sum_j\ii{(W_\varepsilon^\top Y_\varepsilon)_{:,j}}^2
 \le\frac{L}{K}\fro{Y_\varepsilon}^{2}.
\]
Orthogonality preserves Frobenius norms, so
\[
 \Elead
 \le L\frac{c_A+c_B}{K}
 \fro{X_\varepsilon}^{2}\fro{Y_\varepsilon}^{2}.
\]
\Cref{lem:nuc-balance} gives
\(\fro{X_\varepsilon}^{2}\fro{Y_\varepsilon}^{2}
  \to\nuc{AB}^{2}\), proving \eqref{eq:nuc-achievability-bound}.
\end{proof}

\begin{corollary}[Log-achievability by inner transforms]
\label{cor:nuc-log-achievability}
Under the notation of \Cref{thm:nuc-achievability}, there is an absolute
constant \(C_0>0\) such that, for fixed \(\varepsilon>0\), a Haar-random
\(W_\varepsilon\) satisfies \eqref{eq:nuc-flattening-hypothesis} with
\[
 L=C_0\log\!\left(\frac{K(m+n)}{\delta}\right)
\]
with probability at least \(1-\delta\) for every \(\delta\in(0,1)\).  For
power-of-two \(K\), randomized
Hadamard signs give the same order, while
\Cref{thm:nuc-derandomization} gives the explicit deterministic value
\(L=2\log(4K(m+n))\).  Consequently, for an absolute constant \(C>0\),
\begin{equation}
 \frac{c_A+c_B}{K}\nuc{AB}^{\,2}
 \le \inf_{T\in\GL(K)}\Phi_\infty(T)
 \le
 C\frac{(c_A+c_B)\log(K(m+n))}{K}\nuc{AB}^{\,2}.
\label{eq:nuc-two-sided}
\end{equation}
\end{corollary}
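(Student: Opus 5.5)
The proof splits into three parts: a concentration bound for a Haar-random rotation applied to the finite set of \(m+n\) vectors, the analogous statement for randomized Hadamard signs, and the two-sided sandwich \eqref{eq:nuc-two-sided}.

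\emph{Haar case.} Fix \(\varepsilon>0\), so the vectors \(x_i^{(\varepsilon)}\) and \(y_j^{(\varepsilon)}\) are deterministic. If a vector is zero, both sides of \eqref{eq:nuc-flattening-hypothesis} vanish and it can be dropped; otherwise normalize it to the unit sphere. For Haar \(W\in\Oo(K)\), each normalized \(x_iW\) and each \(W^\top y_j\) is uniform on \(S^{K-1}\). Hence every coordinate \(u_a\) of such a vector is a \(1\)-Lipschitz function on the sphere. Lévy concentration, or equivalently the Beta\((1/2,(K-1)/2)\) tail of \(u_a^2\), gives
\[
\Pr\bigl(|u_a|>t\bigr)\le 2e^{-Kt^2/2}.
\]
There are at most \(K(m+n)\) coordinate events, so a union bound with \(t^2=2\log(2K(m+n)/\delta)/K\) yields failure probability at most \(\delta\). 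On the complementary event, \eqref{eq:nuc-flattening-hypothesis} holds with
\[
L=2\log\bigl(2K(m+n)/\delta\bigr)\le C_0\log\bigl(K(m+n)/\delta\bigr).
\]
The last inequality needs some care when \(K(m+n)/\delta\) is small. Since \(K(m+n)\ge2\) and \(\delta<1\), we have \(\log(K(m+n)/\delta)\ge\log 2\), so an absolute \(C_0\) absorbs the constant.

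\emph{Hadamard-sign case.} Take \(W=DH\), where \(D\) is a diagonal matrix of random signs and \(H\) is the normalized Hadamard matrix. Then \((xDH)_a=\sum_k x_k\epsilon_k H_{ka}\) is a Rademacher sum with coefficient vector of squared norm \(\tw x^2/K\). Hoeffding's inequality gives the same sub-Gaussian tail, and the union bound proceeds as above. For \(W^\top y=HDy\) the computation is symmetric. The deterministic constant is quoted from \Cref{thm:nuc-derandomization}, which I take as given.

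\emph{Sandwich.} The lower bound in \eqref{eq:nuc-two-sided} is \Cref{thm:nuc-floor} with \(U_L=I\), \(U_R=I\). For the upper bound, fix \(\delta=1/2\) and, for each \(\varepsilon\), choose a realization \(W_\varepsilon\) from the success event; it exists because that event has positive probability. Then \(L=C_0\log(2K(m+n))\) is independent of \(\varepsilon\), as \Cref{thm:nuc-achievability} requires. That theorem gives
\[
\inf_T\Phi_\infty(T)\le \Phi_\infty(T_\varepsilon W_\varepsilon)\le L\,\frac{c_A+c_B}{K}\nuc{AB}^2+o_\varepsilon(1).
\]
Letting \(\varepsilon\downarrow0\) removes the \(o_\varepsilon(1)\) term. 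Finally, \(\log(2K(m+n))\le 2\log(K(m+n))\) because \(K(m+n)\ge2\), which gives \eqref{eq:nuc-two-sided} with \(C=2C_0\).

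The main obstacle is keeping the constant in the spherical coordinate tail absolute and uniform in \(K\), including small \(K\). I would cite the standard bound \(\Pr(|u_1|>t)\le 2e^{-Kt^2/2}\) for the uniform sphere, checking it against the Beta tail, rather than re-derive Lévy's inequality with explicit constants. Everything else is bookkeeping from earlier results.
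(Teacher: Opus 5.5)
Your proposal is correct and follows essentially the same route as the paper: spherical concentration with a union bound over the \(K(m+n)\) coordinate events for the Haar case, Hoeffding for randomized Hadamard signs, the derandomization theorem quoted for the deterministic constant, and then \Cref{thm:nuc-floor} with \(U_L=U_R=I\) sandwiched against \Cref{thm:nuc-achievability} along \(\varepsilon\downarrow0\) with an \(\varepsilon\)-independent \(L\). The only difference is that you make the constants explicit, namely the cap bound \(2e^{-Kt^2/2}\) (which does hold with this constant, e.g.\ by Ball's spherical-cap lemma) and the fixed choice \(\delta=1/2\), whereas the paper leaves an unspecified \(c\) in \(2e^{-cKt^2}\).
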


\begin{proof}
For the probability statement, fix \(\varepsilon>0\), set
\(X_\varepsilon=AT_\varepsilon\) and
\(Y_\varepsilon=T_\varepsilon^{-1}B\), and write \(N=m+n\).  Let
\(v\) be the transpose of any row of \(X_\varepsilon\) or any column of
\(Y_\varepsilon\).  Under a Haar rotation,
\(W_\varepsilon^\top v\) is uniform on the Euclidean sphere of radius
\(\tw v\), and spherical concentration gives
\[
 \mathbb P\{|(W_\varepsilon^\top v)_k|>t\tw v\}
 \le2e^{-cKt^2}.
\]
A union bound over \(K\) coordinates and the \(N\) rows and columns gives
\(L=C_0\log(KN/\delta)\).  This expression depends only on \(K,N,\delta\), not
on \(\varepsilon\), although a fresh \(W_\varepsilon\) may be selected at each
regularization level.  For power-of-two \(K\), Hoeffding's inequality gives the
same order for randomized Hadamard signs, and
\Cref{thm:nuc-derandomization} gives the stated deterministic value of \(L\).
Choose such a \(W_\varepsilon\) along a sequence
\(\varepsilon\downarrow0\).  Each \(T_\varepsilon W_\varepsilon\) is feasible
for the inner-only infimum, so \Cref{thm:nuc-achievability} and passage to the
limit remove the \(o_\varepsilon(1)\) term in the upper bound.  The lower bound
is \Cref{thm:nuc-floor} with \(U_L=U_R=I\), proving
\eqref{eq:nuc-two-sided}.
\end{proof}

No outer rotation is needed for this bound because
\(T_\varepsilon W_\varepsilon\in\GL(K)\);
\Cref{thm:nuc-full-gauge} shows exactly what the outer rotations add.
In the two-by-two example of \Cref{sec:nuc-intro}, the composition
\(T H\) with the normalized Hadamard \(H\) is exactly the construction of
\Cref{thm:nuc-full-gauge} with the outer rotations set to the identity.  The
output \(4I\) is already SVD-aligned, and the objective reaches the floor
\(32(c_A+c_B)\) with no rotation of the free indices.

\section{Deterministic finite-set flattening}
\label{sec:nuc-derandomization}

The randomized flattening of \Cref{cor:nuc-log-achievability} can be verified after
sampling, but some settings require a deterministic certificate for the
realized instance.  For a power-of-two dimension, we use conditional
expectations to choose Hadamard signs without weakening the logarithmic
guarantee.

\begin{theorem}[Finite-set derandomization]
\label{thm:nuc-derandomization}
Let \(x_1,\ldots,x_N\in\R^d\), where \(d\) is a power of two, and let
\(H\in\R^{d\times d}\) be a normalized Hadamard matrix.  One can compute
\(s\in\{\pm1\}^d\) deterministically in \(O(Nd^2)\) exact-real arithmetic
operations such that every \(j\)
satisfies
\begin{equation}
 \ii{H\diag(s)x_j}
 \le\tw{x_j}\sqrt{\frac{2\log(4Nd)}{d}}.
 \label{eq:nuc-derandomization-bound}
\end{equation}
\end{theorem}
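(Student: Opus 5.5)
The plan is the method of conditional expectations applied to an exponential-moment (Chernoff) potential for the Hadamard-sign construction. Without loss of generality normalize \(\tw{x_j}=1\) for every \(j\); the zero vectors are trivial. For each \(j\) and each output coordinate \(a\in\{1,\ldots,d\}\), the entry is
\[
 (H\diag(s)x_j)_a=\sum_{k=1}^d H_{ak}s_k(x_j)_k,
\]
which is a Rademacher sum with coefficients \(H_{ak}(x_j)_k\). These coefficients have squared sum \(\sum_k (x_j)_k^2/d=1/d\), since \(|H_{ak}|=1/\sqrt d\). Set \(t=\sqrt{2\log(4Nd)/d}\) and \(\lambda=td\). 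I would then use the potential
\[
 \Phi(s)=\sum_{j=1}^N\sum_{a=1}^d\Big(e^{\lambda Z_{ja}(s)}+e^{-\lambda Z_{ja}(s)}\Big)e^{-\lambda t},
 \qquad Z_{ja}(s)=(H\diag(s)x_j)_a .
\]
If \(\Phi(s)<1\), then every term is below one, so \(|Z_{ja}|<t\) for all \(j,a\). This is exactly \eqref{eq:nuc-derandomization-bound}.

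The first step bounds the expectation over uniform random signs. Independence and \(\cosh u\le e^{u^2/2}\) give \(\E e^{\pm\lambda Z_{ja}}\le e^{\lambda^2/(2d)}\). Hence
\[
 \E\Phi\le 2Nd\,e^{\lambda^2/(2d)-\lambda t}=2Nd\,e^{-dt^2/2}=2Nd/(4Nd)=\tfrac12<1 .
\]
This leaves slack. A union-bound argument would only need \(2Nd\,e^{-dt^2/2}<1\), so the constant \(4\) in the statement is comfortable and tolerates a nonstrict final inequality.

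The second step fixes the signs one at a time. Suppose \(s_1,\ldots,s_{\ell-1}\) are fixed. The conditional expectation of each term, given \(s_\ell=\pm1\) with the remaining signs still random, factorizes exactly. Each term becomes
\[
 e^{-\lambda t}\prod_{k<\ell}e^{\pm\lambda H_{ak}s_k(x_j)_k}\cdot e^{\pm\lambda H_{a\ell}s_\ell (x_j)_\ell}\prod_{k>\ell}\cosh\!\big(\lambda H_{ak}(x_j)_k\big).
\]
This is a computable closed form, so I would not use the Hoeffding upper bound as a pessimistic estimator here. The conditional expectation is the average of its two values at \(s_\ell=\pm1\), so one of the two choices does not increase it. Choose that \(s_\ell\). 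After all \(d\) signs are fixed, \(\Phi(s)\le\E\Phi\le\tfrac12<1\), and the bound follows.

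The third step is the operation count, which I expect to be the main technical care point. For each of the \(2Nd\) terms, maintain the running prefix product over \(k<\ell\) and the suffix product of \(\cosh\) factors over \(k>\ell\). I would precompute the suffix products once in \(O(d)\) per term, or update them by division, taking care with possible zero factors, or store them. With this bookkeeping, each step \(\ell\) costs \(O(1)\) per term, so \(O(Nd)\) per step and \(O(Nd^2)\) overall, counting exponentials and \(\cosh\) as exact-real primitive operations. Storing all suffix products costs \(O(Nd^2)\) memory, which is acceptable. The only genuine subtleties are that model of primitive operations and verifying the exact factorization of the conditional expectation, which holds because the remaining signs are independent.
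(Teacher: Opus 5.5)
Your proposal is correct and is essentially the paper's proof. The paper's pessimistic estimator $\Psi_t$ is exactly your conditional expectation of the exponential potential, written with the exact $\cosh$ products, and in both arguments Hoeffding's $\cosh z\le e^{z^2/2}$ is used only to bound the initial value by $1/2$ with $\lambda=d\tau$. The paper then makes the same greedy sign choice; your suffix-product bookkeeping spells out its terse $O(Nd^2)$ count, and the caution about zero factors is unnecessary because every $\cosh$ factor is at least one.
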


\begin{proof}
Discard zero vectors and normalize \(u_j=x_j/\tw{x_j}\).  Let
\(\sigma_1,\ldots,\sigma_d\) be independent Rademacher signs and define
\[
 Z_{jk}=(H\diag(\sigma)u_j)_k
 =\sum_{\ell=1}^d H_{k\ell}\sigma_\ell u_{j\ell}.
\]
Because \(H_{k\ell}^2=1/d\),
\(\sum_\ell H_{k\ell}^2u_{j\ell}^2=1/d\).
For
\(\tau=\sqrt{2\log(4Nd)/d}\), Hoeffding's inequality gives
\[
  \Pr\{|Z_{jk}|>\tau\}
  \le2e^{-d\tau^2/2}
  =\frac{1}{2Nd}.
\]
A union bound shows that a successful sign vector exists.

To find one, fix \(s_1,\ldots,s_t\) and write
\[
 Z_{jk}=a_{jk}^{(t)}+R_{jk}^{(t)},\qquad
 a_{jk}^{(t)}=\sum_{\ell\le t}H_{k\ell}s_\ell u_{j\ell}.
\]
We choose the remaining signs by the method of conditional probabilities,
implemented through conditional-expectation choices with an exponential-moment
pessimistic estimator \cite{raghavan1988probabilistic}.
For \(\lambda>0\), use the pessimistic estimator
\begin{equation}
\begin{split}
 \Psi_t=e^{-\lambda\tau}\sum_{j,k}\bigg[
 &e^{\lambda a_{jk}^{(t)}}
  \prod_{\ell>t}\cosh(\lambda H_{k\ell}u_{j\ell})\\
 +{}&
 e^{-\lambda a_{jk}^{(t)}}
  \prod_{\ell>t}\cosh(\lambda H_{k\ell}u_{j\ell})
 \bigg].
\end{split}
\label{eq:nuc-pessimistic-estimator}
\end{equation}
At a complete assignment, any violation \(|Z_{jk}|>\tau\) makes one exponential
larger than one, so \(\Psi_d<1\) certifies success.  At \(t=0\), the inequality
\(\cosh z\le e^{z^2/2}\) gives
\[
 \Psi_0
 \le2Nd\exp\!\left(-\lambda\tau+\frac{\lambda^2}{2d}\right).
\]
Choosing \(\lambda=d\tau\) yields
\(\Psi_0\le2Nd\,e^{-d\tau^2/2}=1/2<1\).

The conditional value before assigning \(s_{t+1}\) is the average of the two
values obtained from \(s_{t+1}=+1\) and \(s_{t+1}=-1\).  At least one choice
therefore does not increase \(\Psi_t\).  Selecting that choice at each step
produces \(\Psi_d\le\Psi_0<1\), proving
\eqref{eq:nuc-derandomization-bound}.  Each of the \(d\) sign choices updates
\(Nd\) coordinate contributions, for \(O(Nd^2)\) arithmetic operations.  This
count treats exponential evaluations and comparisons as unit-cost; it is not a
bit-complexity claim.
\end{proof}

For \Cref{thm:nuc-achievability}, apply the theorem to the \(N=m+n\) vectors
in \(\R^K\) formed by the transposed rows
\((x_i^{(\varepsilon)})^\top\) and the columns
\(y_j^{(\varepsilon)}\), and use
\(W_\varepsilon=\diag(s)H^\top\).  The theorem is finite-set rather than
uniform over all vectors.  For offline factors---factors known in advance so
that preprocessing is amortized---a draw--verify--retry Hadamard procedure can
reduce expected preprocessing time: every accepted draw is certified, and the
success probability gives finite expected termination.  The deterministic
construction instead guarantees worst-case termination and supports
reproducible preprocessing without a random-number generator (RNG).  Improving
its runtime remains the
algorithmic gap in
\Cref{op:nuc-derandomization}.  All three flattening methods realize the same
peak-to-average principle.

\section{Range-to-energy equivalence and finite-set flatness}
\label{sec:nuc-uncertainty}

Up to this point, we have addressed two sources of quantization error:
factorization energy, minimized by Gram-matrix balancing in
\Cref{sec:nuc-balancing}, and scalar range, reduced by the orthogonal transforms
of \Cref{sec:nuc-achievability,sec:nuc-derandomization}.  Under the practical
inner-only restriction, however, one tension remains: can a single orthogonal
transform simultaneously flatten the rows of \(A\) and the columns of \(B\)
while leaving their outer indices fixed?  We quantify this tension through a
finite-set flatness constant.

\subsection{A finite-set flatness constant}

For a finite set \(\mathcal X\subset\R^K\) and \(W\in\Oo(K)\), we define the
finite-set flatness constant by
\begin{equation}
 \alpha(W;\mathcal X)
 =\sup_{\substack{x\in\mathcal X\\x\ne0}}
 K\frac{\ii{Wx}^{2}}{\tw{x}^{2}}.
 \label{eq:nuc-finite-uncertainty}
\end{equation}
The range--energy inequality gives \(1\le\alpha\le K\); if \(\alpha=1\), each
\(Wx\) with \(x\in\mathcal X\setminus\{0\}\) distributes its energy uniformly.
If the transposed rows of \(X\) belong to \(\mathcal X\), then
\begin{equation}
 \sum_i\ii{(XW^\top)_{i,:}}^2
 \le\frac{\alpha(W;\mathcal X)}{K}\fro X^2.
 \label{eq:nuc-finite-set-sum}
\end{equation}
Thus \(\alpha\) is the worst peak-to-average loss for the realized finite set,
not leverage-score or mutual coherence.

For the balanced sequence in \Cref{lem:nuc-balance}, write
\[
 X_\varepsilon=AT_\varepsilon,\qquad
 Y_\varepsilon=T_\varepsilon^{-1}B
\]
and choose \(W_\varepsilon\in\Oo(K)\).  We define the ratio-of-sums aggregates
\begin{equation}
\begin{split}
 \alpha_{A,\varepsilon}
 &=K\,
 \frac{\sum_i\ii{(X_\varepsilon W_\varepsilon)_{i,:}}^2}
      {\fro{X_\varepsilon}^2},\\
 \alpha_{B,\varepsilon}
 &=K\,
 \frac{\sum_j\ii{(W_\varepsilon^\top
                    Y_\varepsilon)_{:,j}}^2}
      {\fro{Y_\varepsilon}^2}.
\end{split}
\label{eq:nuc-aggregate-uncertainty}
\end{equation}
Each aggregate is at most the corresponding sup-based constant in
\eqref{eq:nuc-finite-uncertainty}, and both are at most \(K\).  The leading
error is exactly
\begin{equation}
 \Elead
 =\frac{c_A\alpha_{A,\varepsilon}
        +c_B\alpha_{B,\varepsilon}}{K}
  \fro{X_\varepsilon}^2\fro{Y_\varepsilon}^2.
 \label{eq:nuc-uncertainty-identity}
\end{equation}
With the effective aggregate
\[
 \alpha_{\mathrm{eff},\varepsilon}
 =\frac{c_A\alpha_{A,\varepsilon}
        +c_B\alpha_{B,\varepsilon}}{c_A+c_B},
\]
balancing gives
\begin{equation}
 \Elead(I,T_\varepsilon W_\varepsilon,I)
 =\frac{c_A+c_B}{K}\,
  \alpha_{\mathrm{eff},\varepsilon}
  \bigl(\nuc{AB}^2+o_\varepsilon(1)\bigr).
 \label{eq:nuc-effective-uncertainty}
\end{equation}
Because \(\alpha_{\mathrm{eff},\varepsilon}\le K\), the product
\(\alpha_{\mathrm{eff},\varepsilon}o_\varepsilon(1)\) remains
\(o_\varepsilon(1)\) for fixed dimensions.  Along this balanced sequence,
\(\alpha_{\mathrm{eff},\varepsilon}\) is bounded if and only if the inner-only
construction reaches the nuclear lower bound within a constant factor.

\subsection{Why uniform control needs redundancy}

Finite-set flattening chooses a rotation after seeing the rows and columns.
The certificate it produces is therefore instance-specific, and a natural
question is whether one rotation could instead be certified once, for every
input the factors will ever multiply.  The answer is no, and the obstruction
is one line: a square orthogonal transform cannot flatten every input, because
taking \(x=W^\top e_k\), the transpose of its \(k\)th row, gives \(Wx=e_k\),
whose maximum coordinate has magnitude one.  Uniform constant-level control
therefore requires a redundant representation, and the price of redundancy is
quantified by the Kashin theory that this subsection develops.

To avoid collision with the matrix-product dimensions, let
\(F:\R^{n_f}\to\R^{d_f}\) satisfy \(FF^\top=I_{d_f}\).  Its columns form a
Parseval tight frame.  A coefficient vector \(a\) is a Kashin representation
of \(x\) at level \(\lambda_0\) if
\begin{equation}
 x=Fa,\qquad
 \ii a\le \lambda_0\frac{\tw x}{\sqrt{n_f}}.
 \label{eq:nuc-kashin-representation}
\end{equation}
This spreads every represented vector among \(n_f\) coefficients; it is the
frame form of Kashin's theorem \cite{kashin1977diameters,kashin}.  With
\(B_2^{d_f}=\{x:\tw x\le1\}\) and
\(B_\infty^{n_f}=\{a:\ii a\le1\}\), the all-input condition is equivalent to
\begin{equation}
 B_2^{d_f}\subseteq
 \frac{\lambda_0}{\sqrt{n_f}}F(B_\infty^{n_f}).
 \label{eq:nuc-kashin-inclusion}
\end{equation}
For \(E=\operatorname{range}(F^\top)\), define
\begin{equation}
 \Lambda_1(E)=
 \sup_{z\in E\setminus\{0\}}
 \frac{\sqrt{n_f}\,\tw z}{\lVert z\rVert_1}.
 \label{eq:nuc-lambda-one}
\end{equation}
Then \eqref{eq:nuc-kashin-inclusion} implies
\(\Lambda_1(E)\le\lambda_0\).  This follows by polarity in
\Cref{app:nuc-kashin}, which also recalls the weak uncertainty principle of
Lyubarskii and Vershynin.

The two regimes are now adjacent, and their contrast is the structural
conclusion of the inner-only analysis.  Finite-set flatness for one fixed
product costs a logarithmic factor and is achievable by a deterministic
construction (\Cref{thm:nuc-derandomization}).  Uniform control of every
input is impossible for any square orthogonal transform at any constant
level and therefore requires redundancy.  For suitable frame families with a
fixed redundancy ratio \(n_f/d_f>1\), it becomes achievable at a
dimension-independent level \(\lambda_0\).  Redundancy, not rotation
selection, is what converts an instance-specific certificate into a
universal one.  We now evaluate the finite-set quantities on controlled
instances;
\Cref{sec:nuc-open} returns to the structural questions raised here.

\section{Reproducible numerical evaluation}
\label{sec:nuc-experiments}

The theory gives an exact full-gauge benchmark and a worst-case logarithmic
inner-only construction.  We use controlled synthetic instances to check those
predictions, the output invariance of the lower bound, and the regularized
singular limit.  We compute all reported errors from the analytic leading
objective \(\Phi_\infty\); sampled dither is unnecessary because the
second-moment model gives the expected error exactly.

\subsection{Protocol}

We derive all random streams from base seed \(20260723\).  The main study uses
\(m=n=K=64\), singular values
\(\operatorname{geomspace}(1,0.15,K)\), and 50 independent trials per
full-rank family.  Gaussian QR generates every Haar-distributed orthogonal
factor.  The families are:
\begin{enumerate}
  \item \emph{Spiky}: diagonal square-root factors receive independent
  \(N(0,1.35^2)\) log-scales; eight random coordinates receive an additional
  signed shift of magnitude \(2.25\), after which we mean-center and clip the
  scales to \([-\log 25,\log 25]\).
  \item \emph{Dense diagonal gauge}: independent \(N(0,1.35^2)\) log-scales
  multiply Haar-oriented factors.
  \item \emph{Dense general gauge}: the same base factors receive
  \(Q\operatorname{diag}(\exp(\operatorname{linspace}(-2,2,K)))Q^\top\)
  for an independent Haar \(Q\).
\end{enumerate}
The diagonal method exactly minimizes the factorization-energy objective over
positive diagonal gauges, using scales proportional to
\((\lVert B_{k,:}\rVert_2^2/\lVert A_{:,k}\rVert_2^2)^{1/4}\).
The general balancing transform is computed from
\eqref{eq:nuc-regularized-transform} by symmetric eigendecompositions with an
eigenvalue floor of
\(2\mathord{\times}10^{-14}\max\{\tr(P)/K,1\}\).  We also consider a separate
singular case with
\(\rank(A)=\rank(B)=32\), latent coordinate supports \(1{:}32\) and
\(17{:}48\), and hence \(\rank(AB)=16\).  Its nonzero factor values are
\(\sqrt{\operatorname{geomspace}(1,0.2,32)}\), and every
\(\varepsilon\in\{10^{-1},\ldots,10^{-10}\}\) is used without adaptive
stopping.

For each family, we compare the original factors, diagonal balancing plus
randomized Hadamard flattening, Gram-matrix balancing plus either Haar or
deterministic Hadamard flattening, and the full-gauge SVD--Hadamard construction
from \Cref{thm:nuc-full-gauge}.  The last method is the exact reference at
\(K=64\).  To isolate the flattening methods, we use one fixed balanced
instance with 100 independent Haar and randomized-Hadamard draws and one
deterministic certificate.  For this comparison, the reported peak-to-average
statistic is the finite-set flatness constant from
\eqref{eq:nuc-finite-uncertainty}.

The normalized objective-to-floor ratio is algebraically independent of a
common bit width, so duplicating it at 4 and 8 bits would add no experiment.
In the raw validation, we nevertheless evaluate this ratio at both bit widths
and require the two values to agree numerically.

\subsection{Generated results}

% Generated by experiments/reproduce.py; do not edit.
\providecommand{\ExperimentSeed}{20260723}
\providecommand{\NuclearExperimentDimension}{64}
\providecommand{\NuclearExperimentTrials}{50}
\providecommand{\NuclearFlatteningDraws}{100}
\providecommand{\NuclearDimensionScalingTrials}{8}
\providecommand{\NuclearDimensionScalingLargestK}{1024}
\providecommand{\NuclearFullGaugeMedianRatio}{1.000}
\providecommand{\NuclearSpikyOriginalMedianRatio}{4.538\times 10^{4}}
\providecommand{\NuclearSpikyDeterministicMedianRatio}{1.000}
\providecommand{\NuclearDenseDiagonalOriginalMedianRatio}{1.141\times 10^{4}}
\providecommand{\NuclearDenseDiagonalBalancedMedianRatio}{6.307}
\providecommand{\NuclearGeneralOriginalMedianRatio}{355.643}
\providecommand{\NuclearGeneralBalancedMedianRatio}{6.389}
\providecommand{\NuclearMaximumRelativeFloorSpread}{9.442\times 10^{-16}}
\providecommand{\NuclearSingularSmallestEpsilon}{1.000\times 10^{-10}}
\providecommand{\NuclearSingularLargestCondition}{8.125\times 10^{4}}
\providecommand{\NuclearSingularSmallestExcess}{2.991\times 10^{-5}}
\providecommand{\NuclearHaarMedianRatio}{6.975}
\providecommand{\NuclearRandomHadamardMedianRatio}{6.943}
\providecommand{\NuclearDeterministicHadamardRatio}{6.044}
\providecommand{\NuclearDeterministicHadamardCoherence}{10.116}
\providecommand{\NuclearScalingSmallestHaarRatio}{4.480}
\providecommand{\NuclearScalingLargestHaarRatio}{11.933}
\providecommand{\NuclearScalingLargestRandomHadamardRatio}{11.964}

The full-gauge baseline yields a median ratio of
\(\NuclearFullGaugeMedianRatio\) in every family.  For the spiky family, we
reduce the median from \(\NuclearSpikyOriginalMedianRatio\) in the original
factorization to \(\NuclearSpikyDeterministicMedianRatio\) by applying diagonal
balancing and deterministic Hadamard flattening.  The exact \(1.000\) follows
algebraically: the diagonal output leaves the balanced factors SVD-aligned, and
the Hadamard rotation makes every relevant row and column flat, so outer
rotations are unnecessary in the construction of
\Cref{thm:nuc-full-gauge}.  For the dense diagonal-gauge family the medians are
\(\NuclearDenseDiagonalOriginalMedianRatio\) (original) and
\(\NuclearDenseDiagonalBalancedMedianRatio\) (general inner balancing plus
deterministic Hadamard flattening); for the dense general-gauge family they are
\(\NuclearGeneralOriginalMedianRatio\) and
\(\NuclearGeneralBalancedMedianRatio\), respectively.

Across independently refactorized copies of the same output, the maximum
relative spread of the nuclear lower bound is
\(\NuclearMaximumRelativeFloorSpread\).  This is a numerical invariance check,
not a statistical estimate.  On the fixed flattening instance, median ratios
over \(\NuclearFlatteningDraws\) draws are \(\NuclearHaarMedianRatio\) for Haar
and \(\NuclearRandomHadamardMedianRatio\) for randomized Hadamard.  The
deterministic construction gives a ratio of
\(\NuclearDeterministicHadamardRatio\) and a maximum finite-set flatness of
\(\NuclearDeterministicHadamardCoherence\).

\begin{table}[t]
  \centering
  \small
  \caption{Leading objective divided by
  \((c_A+c_B)\nuc{AB}^{\,2}/K\) at \(m=n=K=64\).  Entries show medians and
  interquartile ranges over 50 trials per family, all derived from base seed
  \(20260723\).  In the table, SVD denotes singular-value decomposition, H a
  Hadamard transform, and IQR the interquartile range.}
  \label{tab:nuc-generated-family-results}
  % Generated by experiments/reproduce.py; do not edit.
\begin{tabular}{llrr}
\toprule
Family & Method & Median & IQR \\
\midrule
Spiky & Full gauge (SVD + Hadamard) & $1.000$ & $[1.000, 1.000]$ \\
Spiky & Original & $4.538\times 10^{4}$ & $[2.819\times 10^{4}, 7.450\times 10^{4}]$ \\
Spiky & Diagonal + randomized H & $1.000$ & $[1.000, 1.000]$ \\
Spiky & General inner + Haar & $6.849$ & $[6.746, 7.071]$ \\
Spiky & General inner + deterministic H & $1.000$ & $[1.000, 1.000]$ \\
Dense diagonal gauge & Full gauge (SVD + Hadamard) & $1.000$ & $[1.000, 1.000]$ \\
Dense diagonal gauge & Original & $1.141\times 10^{4}$ & $[6357.632, 2.516\times 10^{4}]$ \\
Dense diagonal gauge & Diagonal + randomized H & $6.915$ & $[6.785, 6.989]$ \\
Dense diagonal gauge & General inner + Haar & $6.835$ & $[6.716, 6.971]$ \\
Dense diagonal gauge & General inner + deterministic H & $6.307$ & $[6.241, 6.424]$ \\
Dense general gauge & Full gauge (SVD + Hadamard) & $1.000$ & $[1.000, 1.000]$ \\
Dense general gauge & Original & $355.643$ & $[343.700, 363.076]$ \\
Dense general gauge & Diagonal + randomized H & $331.807$ & $[324.801, 340.266]$ \\
Dense general gauge & General inner + Haar & $6.911$ & $[6.800, 7.019]$ \\
Dense general gauge & General inner + deterministic H & $6.389$ & $[6.279, 6.445]$ \\
\bottomrule
\end{tabular}

\end{table}

\begin{figure}[t]
  \centering
  \includegraphics[width=0.72\linewidth]{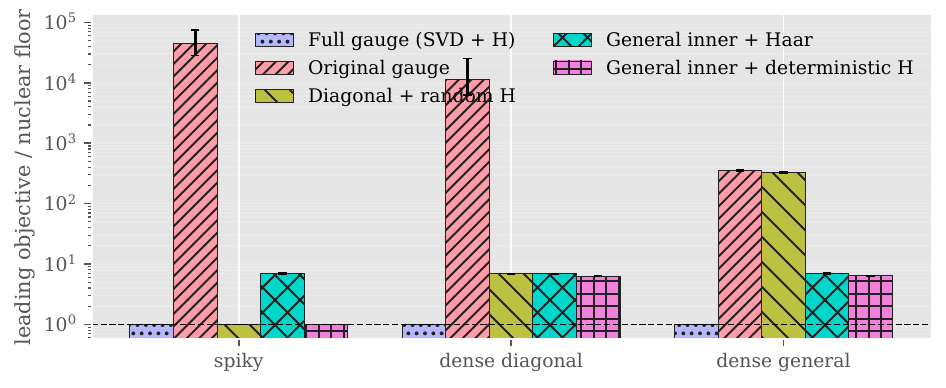}

  \includegraphics[width=0.72\linewidth]{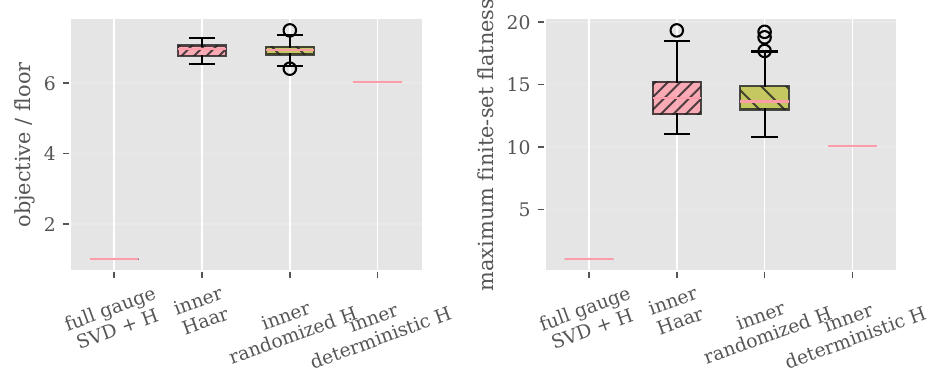}
  \caption{\emph{Top:} Objective-to-floor ratios for the three fixed-output
  families at \(m=n=K=64\); bars and whiskers show medians and interquartile
  ranges over 50 trials.  \emph{Bottom:} Leading ratios and maximum finite-set
  flatness on one balanced instance; boxes summarize 100 Haar and
  randomized-Hadamard draws, and points show the deterministic certificate.
  In both panels, the full-gauge singular-value-decomposition (SVD)--Hadamard
  (H) reference remains at one.}
  \label{fig:nuc-method-comparisons}
\end{figure}

\subsection{Dimension scaling and conditioning}

To separate the two theorem regimes, we also use non-square dimensions
\[
 K\in\{16,32,64,128,256,1024\},\qquad m=3K/2,\qquad n=5K/4.
\]
At each dimension, eight independent Gaussian-QR Stiefel pairs form
balanced factors with the same geometric singular-value schedule.  The median
inner-Haar ratio rises from \(\NuclearScalingSmallestHaarRatio\) to
\(\NuclearScalingLargestHaarRatio\), whereas the full-gauge baseline remains
one; randomized Hadamard follows the same trend.

\begin{figure}[t]
  \centering
  \includegraphics[height=0.29\linewidth]{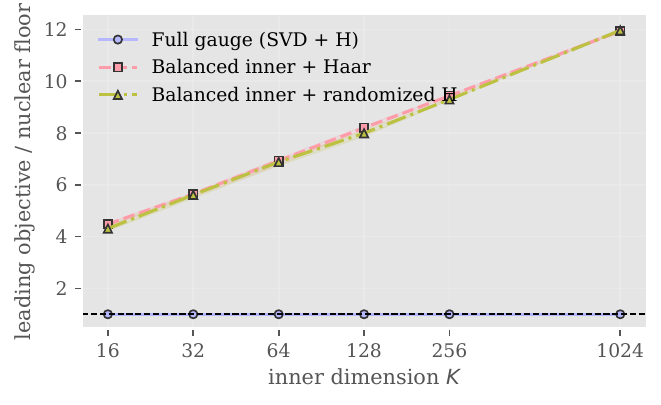}\hfill
  \includegraphics[height=0.29\linewidth]{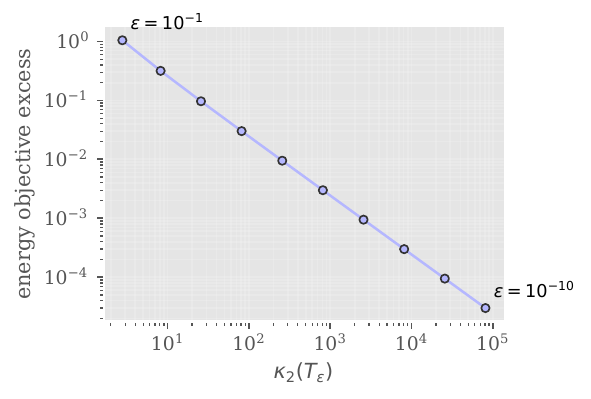}
  \caption{\emph{Left:} Dimension sweep for non-square balanced factors; curves
  show medians and interquartile ranges over eight trials per dimension.
  Inner-only flattening grows slowly while the full-gauge construction stays at
  one.  \emph{Right:} Regularization sweep for one rank-deficient
  \(m=n=K=64\) instance, showing
  \(\lVert AT_\varepsilon\rVert_F^2
    \lVert T_\varepsilon^{-1}B\rVert_F^2/\nuc{AB}^{\,2}-1\)
  against \(\kappa_2(T_\varepsilon)\).  Legend abbreviations use SVD for
  singular-value decomposition and H for a Hadamard transform.}
  \label{fig:nuc-scaling-and-conditioning}
\end{figure}

For the rank-deficient case, the smallest regularizer
\(\varepsilon=\NuclearSingularSmallestEpsilon\) yields
\(\kappa_2(T_\varepsilon)=\NuclearSingularLargestCondition\) and a relative
energy-objective excess of \(\NuclearSingularSmallestExcess\).  This
regularization sweep illustrates the limiting construction in
\Cref{lem:nuc-balance} for this instance.

\paragraph{Computational regime.}
For static factors, the SVD and balancing work can be performed offline.
A dense inner transform costs \(O((m+n)K^2)\) to apply, while a Hadamard
transform costs
\(O((m+n)K\log K)\).  Dense outer rotations and their undo cost
\(O(mn(m+n))\) at the output unless they are fused or replaced by structured
maps; the inner-only construction is therefore the closer model when free
indices carry operational meaning.  If transform conditioning or application
cost is constrained, the relevant practical problem includes the constraint
\(\kappa(T)\le\kappa_{\max}\) or a structured-transform restriction.  The
singular sweep shows one instance of that tradeoff.

\subsection{Reproduction and validation}

The source repository is
\url{https://github.com/piyush314/gauge-floors}.  The command
\begin{center}
\small\texttt{python -m experiments.reproduce nuclear}
\end{center}
regenerates the CSV records, summary JSON, TeX tables and macros, and vector
figures.  The file \path{experiments/results/nuclear_manifest.json} records the
environment versions, input seed, and SHA-256 hashes of generated artifacts;
\path{experiments/README.md} documents the exact generators and numerical
tolerances.

The run fails if any normalized objective falls below one beyond floating-point
tolerance, if the range objective disagrees with
\eqref{eq:nuc-uncertainty-identity}, if a product changes under a gauge, or if
the deterministic Hadamard certificate violates
\eqref{eq:nuc-derandomization-bound}.  These checks connect the artifacts to
the proved invariants; the remaining numerical results characterize the stated
synthetic families.

\section{Open problems}
\label{sec:nuc-open}

The experiments numerically validate the implementation of the proved
invariants; the remaining gaps are
theoretical.  Over the full gauge, the leading-error lower bound is attained at
Hadamard dimensions and is within a factor of two in general.  The open
questions concern inner-only flatness, faster certification, uniform frames,
and the boundary of the stochastic scalar model.

\begin{openproblem}[Worst-case normalized range gap for inner-only transforms]
\label{op:nuc-gap}
For \(AB\ne0\) and fixed positive \(c_A,c_B\), define
\begin{equation}
 \Gamma(A,B)=
 \frac{\displaystyle\inf_{T\in\GL(K)}\Phi_\infty(T)}
 {\displaystyle\frac{c_A+c_B}{K}\nuc{AB}^{\,2}}.
 \label{eq:nuc-normalized-gap}
\end{equation}
\Cref{eq:nuc-two-sided} gives
\(1\le\Gamma(A,B)\le C\log(K(m+n))\).  Over the full gauge,
\Cref{thm:nuc-full-gauge} bounds the corresponding ratio by two; this question
imposes \(U_L=U_R=I\).  Is the supremum of \(\Gamma(A,B)\) bounded by a
dimension-independent constant, or what is its sharp growth in \(m,n,K\)?
Can a nonorthogonal construction improve the finite-set logarithm?
\end{openproblem}

\begin{openproblem}[Near-linear deterministic flattening]
\label{op:nuc-derandomization}
Can one achieve the guarantee of \Cref{thm:nuc-derandomization}
deterministically in \(\widetilde O(Nd)\) time instead of \(O(Nd^2)\), where
\(\widetilde O\) hides polylogarithmic factors?  Such a result would narrow the
preprocessing gap between deterministic certification and randomized Hadamard
flattening.
\end{openproblem}

\begin{openproblem}[Partial-Fourier uniform Kashin frames]
\label{op:nuc-kashin}
Fix a redundancy ratio \(n_f/d_f>1\).  Can one choose \(d_f\) rows \(\Omega\)
of the normalized \(n_f\times n_f\) discrete Fourier matrix so that
\[
 E_\Omega=\operatorname{range}(F_\Omega^*)
 \quad\text{satisfies}\quad
 \Lambda_1(E_\Omega)\le C
\]
for a dimension-independent \(C\)?  Equivalently, can these partial-Fourier
frames have a dimension-independent Kashin level in the normalization of
\Cref{sec:nuc-uncertainty}?  This is the classical partial-Fourier
\(\Lambda_1\) problem discussed in the Kashin-frame literature
\cite{kashin}.
\end{openproblem}

\begin{openproblem}[Finite-rate converse beyond the model]
\label{op:nuc-finite-rate}
Which instance invariant governs finite-rate error after clipping, correlated
rounding, or vector and nested-lattice quantization?  Is the nuclear norm still
one component of a sharp converse, and what additional dependence on codebook
geometry, overload, or factor reuse is necessary?
\end{openproblem}

\section{Conclusion}
\label{sec:nuc-conclusion}

The output nuclear norm sets a gauge-invariant lower bound on the leading
expected squared product error for matrix factors under independent per-vector
dither noise.  Balancing the Gram matrices closes the factorization energy
loss; regularization covers singular factors.  Over the full gauge, an
SVD-aligned construction makes the infimum equal to this bound at Hadamard
dimensions and comes within a factor of two for every dimension \(K\).
Restricted to inner transforms, finite-set flattening reaches the bound within
a logarithmic factor.  Deterministic Hadamard sign selection gives the same
guarantee at power-of-two dimensions.  The exact bilinear term yields the
nonasymptotic complete-error lower bound
\eqref{eq:nuc-full-error-floor}.

The lower bound and matching constructions turn the output nuclear norm into
both an invariant obstruction and a constructive target.  Together, they
characterize the full-gauge optimum up to a
universal constant for every dimension and isolate finite-set flatness as the
remaining cost when meaningful free indices must be preserved.

This characterization gives transform designers two diagnostics that require
no simulated quantization: compare a candidate with the output-only nuclear
floor, then attribute the residual gap to factorization energy and finite-set
flatness.  The Kashin interpretation also explains why controlling a fixed
collection of rows and columns is easier than controlling every possible
input: uniform control generally needs redundant representations.  Extending
these certificates to finite grids with clipping, correlated errors, and
application-constrained transforms is therefore the natural next step.

\section*{Acknowledgments}

The authors acknowledge support from the 2026 Laboratory Directed Research and
Development (LDRD) FORSEE initiative, ``CCSD Core: Foundational Research for
Smart Extreme-scale Ecosystems,'' at Oak Ridge National Laboratory.

Generative AI tools were used for LaTeX formatting, copy-editing, figure
generation, and for assistance with scripts used in numerical checks. They
were also used to help locate and organize related literature. All theorems,
proofs, citations, and claims of novelty were checked by the author, who
takes full responsibility for the content of this paper.

{\footnotesize
\bibliographystyle{plain}
\bibliography{references}

@article{qmm,
  title        = {Contraction-Gauge Preconditioning for Quantized Matrix Multiplication},
  author       = {Sao, Piyush and Miniskar, Narasinga and Valero-Lara, Pedro and Teranishi, Keita and Seal, Sudip},
  journal      = {arXiv preprint arXiv:2607.18745},
  year         = {2026},
  doi          = {10.48550/arXiv.2607.18745},
  url          = {https://arxiv.org/abs/2607.18745}
}

@unpublished{metriccompanion,
  title        = {Metric Discrepancy for Product-Aware Deterministic Matrix Rounding},
  author       = {Sao, Piyush and Miniskar, Narasinga and Valero-Lara, Pedro and Teranishi, Keita and Seal, Sudip},
  note         = {Companion manuscript},
  year         = {2026}
}

@article{graystockham,
  title        = {Dithered Quantizers},
  author       = {Gray, Robert M. and Stockham, Thomas G.},
  journal      = {IEEE Transactions on Information Theory},
  volume       = {39},
  number       = {3},
  pages        = {805--812},
  year         = {1993},
  doi          = {10.1109/18.256489}
}

@article{grayneuhoff,
  title        = {Quantization},
  author       = {Gray, Robert M. and Neuhoff, David L.},
  journal      = {IEEE Transactions on Information Theory},
  volume       = {44},
  number       = {6},
  pages        = {2325--2383},
  year         = {1998},
  doi          = {10.1109/18.720541}
}

@article{kashin,
  title        = {Uncertainty Principles and Vector Quantization},
  author       = {Lyubarskii, Yurii and Vershynin, Roman},
  journal      = {IEEE Transactions on Information Theory},
  volume       = {56},
  number       = {7},
  pages        = {3491--3501},
  year         = {2010},
  doi          = {10.1109/TIT.2010.2048458}
}

@article{kashin1977diameters,
  title        = {Diameters of Some Finite-Dimensional Sets and Classes of Smooth Functions},
  author       = {Kashin, B. S.},
  journal      = {Mathematics of the USSR-Izvestiya},
  volume       = {11},
  number       = {2},
  pages        = {317--333},
  year         = {1977},
  doi          = {10.1070/IM1977v011n02ABEH001719},
  url          = {https://www.mathnet.ru/eng/im1805}
}

@article{raghavan1988probabilistic,
  title        = {Probabilistic Construction of Deterministic Algorithms: Approximating Packing Integer Programs},
  author       = {Raghavan, Prabhakar},
  journal      = {Journal of Computer and System Sciences},
  volume       = {37},
  number       = {2},
  pages        = {130--143},
  month        = oct,
  year         = {1988},
  doi          = {10.1016/0022-0000(88)90003-7},
  url          = {https://www.sciencedirect.com/science/article/pii/0022000088900037}
}

@article{moore1981balanced,
  title        = {Principal Component Analysis in Linear Systems: Controllability, Observability, and Model Reduction},
  author       = {Moore, Bruce C.},
  journal      = {IEEE Transactions on Automatic Control},
  volume       = {26},
  number       = {1},
  pages        = {17--32},
  year         = {1981},
  doi          = {10.1109/TAC.1981.1102568}
}

@inproceedings{srebro2005rank,
  title        = {Rank, Trace-Norm and Max-Norm},
  author       = {Srebro, Nathan and Shraibman, Adi},
  booktitle    = {Learning Theory: 18th Annual Conference on Learning Theory, {COLT} 2005},
  series       = {Lecture Notes in Computer Science},
  volume       = {3559},
  pages        = {545--560},
  publisher    = {Springer},
  year         = {2005},
  doi          = {10.1007/11503415_37}
}

@article{ordentlich2024optimal,
  title        = {Optimal Quantization for Matrix Multiplication},
  author       = {Ordentlich, Or and Polyanskiy, Yury},
  journal      = {IEEE Transactions on Information Theory},
  volume       = {72},
  number       = {3},
  pages        = {1943--1972},
  year         = {2026},
  doi          = {10.1109/TIT.2025.3649596},
  note         = {arXiv:2410.13780}
}

@article{ordentlich2026highrate,
  title        = {High-Rate Quantized Matrix Multiplication {I}},
  author       = {Ordentlich, Or and Polyanskiy, Yury},
  journal      = {arXiv preprint arXiv:2601.17187},
  year         = {2026},
  doi          = {10.48550/arXiv.2601.17187},
  note         = {Version 2},
  url          = {https://arxiv.org/abs/2601.17187}
}

@article{ordentlich2026highrate2,
  title        = {High-Rate Quantized Matrix Multiplication {II}},
  author       = {Ordentlich, Or and Polyanskiy, Yury},
  journal      = {arXiv preprint arXiv:2605.13768},
  year         = {2026},
  doi          = {10.48550/arXiv.2605.13768},
  note         = {Version 2},
  url          = {https://arxiv.org/abs/2605.13768}
}

@inproceedings{xiao2023smoothquant,
  title        = {{SmoothQuant}: Accurate and Efficient Post-Training Quantization for Large Language Models},
  author       = {Xiao, Guangxuan and Lin, Ji and Seznec, Mickael and Wu, Hao and Demouth, Julien and Han, Song},
  booktitle    = {Proceedings of the 40th International Conference on Machine Learning},
  series       = {Proceedings of Machine Learning Research},
  volume       = {202},
  year         = {2023}
}

@inproceedings{ashkboos2024quarot,
  title        = {{QuaRot}: Outlier-Free 4-Bit Inference in Rotated {LLM}s},
  author       = {Ashkboos, Saleh and Mohtashami, Amirkeivan and Croci, Maximilian L. and Li, Bo and Cameron, Pashmina and Jaggi, Martin and Alistarh, Dan and Hoefler, Torsten and Hensman, James},
  booktitle    = {Advances in Neural Information Processing Systems},
  year         = {2024},
  note         = {arXiv:2404.00456}
}

@inproceedings{liu2025spinquant,
  title        = {{SpinQuant}: {LLM} Quantization with Learned Rotations},
  author       = {Liu, Zechun and Zhao, Changsheng and Fedorov, Igor and Soran, Bilge and Choudhary, Dhruv and Krishnamoorthi, Raghuraman and Chandra, Vikas and Tian, Yuandong and Blankevoort, Tijmen},
  booktitle    = {International Conference on Learning Representations},
  year         = {2025},
  url          = {https://openreview.net/forum?id=ogO6DGE6FZ}
}

@inproceedings{ma2024affinequant,
  title        = {{AffineQuant}: Affine Transformation Quantization for Large Language Models},
  author       = {Ma, Yuexiao and Li, Huixia and Zheng, Xiawu and Ling, Feng and Xiao, Xuefeng and Wang, Rui and Wen, Shilei and Chao, Fei and Ji, Rongrong},
  booktitle    = {International Conference on Learning Representations},
  year         = {2024},
  url          = {https://arxiv.org/abs/2403.12544}
}

@inproceedings{wush,
  title        = {{WUSH}: Near-Optimal Adaptive Transforms for {LLM} Quantization},
  author       = {Chen, Jiale and Egiazarian, Vage and Castro, Roberto L. and Hoefler, Torsten and Alistarh, Dan},
  booktitle    = {Proceedings of the 43rd International Conference on Machine Learning},
  series       = {Proceedings of Machine Learning Research},
  volume       = {306},
  year         = {2026},
  address      = {Seoul, South Korea},
  note         = {arXiv:2512.00956v3; PMLR pagination pending},
  url          = {https://arxiv.org/abs/2512.00956v3}
}
}

\appendix
\section{Kashin duality and the weak uncertainty principle}
\label{app:nuc-kashin}

Polarity---the duality between a convex body and the linear functionals bounded
by one on it---makes the geometry behind
\eqref{eq:nuc-kashin-inclusion} explicit.  The support functions are
\[
 h_{B_2^{d_f}}(y)=\tw y,\qquad
 h_{F(B_\infty^{n_f})}(y)=\lVert F^\top y\rVert_1.
\]
Consequently, \eqref{eq:nuc-kashin-inclusion} is equivalent to
\[
 \tw y\le
 \frac{\lambda_0}{\sqrt{n_f}}\lVert F^\top y\rVert_1.
\]
Because \(F^\top\) is an isometry onto
\(E=\operatorname{range}(F^\top)\), this becomes
\[
 \lVert z\rVert_1\ge
 \frac{\sqrt{n_f}}{\lambda_0}\tw z,\qquad z\in E,
\]
which is \(\Lambda_1(E)\le\lambda_0\).

Lyubarskii and Vershynin obtain such representations from a weak uncertainty
principle \cite{kashin}.  For \(\delta,\eta\in(0,1)\), suppose
\[
 |\operatorname{supp}z|\le\delta n_f
 \quad\Longrightarrow\quad
 \tw{Fz}\le\eta\tw z.
\]
Their iterative truncation procedure clips large coefficients, represents the
residual, and repeats.  The residual contracts geometrically, producing a
Kashin representation with level
\(\lambda_0=(1-\eta)^{-1}\delta^{-1/2}\).

\end{document}